\newcommand{\adj}[1]{\Gamma(#1)}
\newcommand{\bc}[1]{{\tt bc}[#1]}
\newcommand{\bcp}[1]{{\tt bc}'[#1]}
\newcommand{\sig}[1]{{\sigma}[#1]}
\newcommand{\del}[1]{{\delta}[#1]}
\newcommand{\dis}[1]{{\tt d}[#1]}
\newcommand{\pre}[1]{{\tt P}[#1]}
\newcommand{\reach}[1]{{\tt reach}[#1]}
\newcommand{\ident}[1]{{\tt ident}[#1]}
\newcommand{\union}{\cup}
\newcommand{\bcalg}{\textsc{Bc-Org}}
\newcommand{\bcalgwg}{\textsc{Bc-Reach}}
\newcommand{\bcalgside}{\textsc{Bfs-Side}}
\newtheorem{corr}{Corollary}
\newtheorem{theorem}{Theorem}
\newtheorem{lemma}{Lemma}
\newtheorem{proof}{Pf.}
\title{Shattering and Compressing Networks for Centrality Analysis}
\author{Ahmet Erdem Sar{\i}y\"{u}ce$^{1,2}$, Erik Saule$^1$, Kamer Kaya$^1$, \"{U}mit V. \c{C}ataly\"{u}rek$^{1,3}$\\
  {$^1$ Dept. Biomedical Informatics, The Ohio State University}\\
  {$^2$ Dept. Computer Science and  Engineering, The Ohio State University}\\
  {$^3$ Dept. Electrical and Computer Engineering, The Ohio State
    University}\\
  Email: {\textit\{aerdem,esaule,kamer,umit\}@bmi.osu.edu}\\
}
\begin{document}

\maketitle
\begin{center}
(Previously submitted to ICDM on June 18, 2012)
\end{center}

\begin{abstract}
Who is more important in a network? Who controls the flow between the
nodes or whose contribution is significant for connections?
Centrality metrics play an important role while answering these
questions. The betweenness metric is useful for network analysis and
implemented in various tools. Since it is one of the most
computationally expensive kernels in graph mining, several techniques
have been proposed for fast computation of betweenness centrality. In
this work, we propose and investigate techniques which compress a
network and shatter it into pieces so that the rest of the computation
can be handled independently for each piece. Although we designed and
tuned the shattering process for betweenness, it can be adapted for
other centrality metrics in a straightforward manner. Experimental
results show that the proposed techniques can be a great arsenal to
reduce the centrality computation time for various types of networks.\\
{\bf Keywords:} Betweenness centrality; network analysis; graph
mining; connected components
\end{abstract}

\section{Introduction}\label{sec:int}
Centrality metrics play an important role to successfully detect the
central nodes in various types of networks such as social
networks~\cite{Ediger10,lou10}, biological
networks~\cite{bader2008,Koschutzki08}, power networks~\cite{Jin00},
covert networks~\cite{Krebs02} and decision/action
networks~\cite{simsekb08}. Among these metrics, {\em betweenness} has
always been an intriguing one and it has been implemented in several
tools which are widely used in practice for analyzing networks and
graphs~\cite{Lugowski12,Bader08}. In short, the betweenness
centrality~(BC) score of a node is the sum of the fractions of
shortest paths between node pairs that pass through the node of
interest~\cite{Freeman77}. Hence, it is a measure for the
contribution/load/influence/effectiveness of a node while
disseminating information through a network.

Although betweenness centrality has been proved to be successful for
network analysis, computing betweenness centrality scores of all the
nodes in a network is expensive. The first trivial algorithms for BC
have $\Theta(n^3)$ and $\Theta(n^2)$ time and space complexity,
respectively, where $n$ is the number of nodes in the
network. Considering the size of today's networks, these algorithms
are not practical. Brandes proposed a faster algorithm which has
$\mathcal{O}(nm)$ and $\mathcal{O}(nm + n^2\log{n})$ time complexity
for unweighted and weighted networks, respectively, where $m$ is the
number of node-node interactions in the
network~\cite{brandes2001}. Since the networks in real life are
usually {\it sparse}, $m \approx kn$ for a small $k$,
$\mathcal{O}(nm)$ is much better than $\mathcal{O}(n^3)$. Brandes'
algorithm also has a better, $\mathcal{O}(n + m)$, space complexity
and currently, it is the best algorithm for BC computations. Yet, it
is not fast enough to handle almost 1 billion users of Facebook or 150
million users of Twitter. Several techniques have been proposed to
alleviate the complexity of BC computation for large networks. A set
of works propose using estimated values instead of exact BC
scores~\cite{brandesp07,geisbergerss08}, and others parallelize BC
computations on distributed memory
architectures~\cite{Lichtenwalter11}, multicore
CPUs~\cite{bader2008,bader06,madduri2009}, and
GPUs~\cite{Shi11,Pande11,Jia11}.

In this work, we propose a set of techniques which compress a network
and break it into pieces such that the BC scores of two nodes in two
different pieces can be computed independently, and hence, in a more
efficient manner. Although we designed and tuned these techniques for
standard, shortest-path vertex-betweenness centrality, they can be
modified for other path-based centrality metrics such as {\em
  closeness} or other BC variants such as {\em edge betweenness} and
{\em group betweenness}~\cite{Brandes08}. Similarly, although we are
interested in unweighted undirected networks in this paper, our
shattering techniques are valid also for weighted directed
networks. Experimental results show that proposed techniques are very
effective and they can be a great arsenal to reduce the computation in
practice.

The rest of the paper is organized as follows: In
Section~\ref{sec:bac}, an algorithmic background for betweenness
centrality is given. The proposed shattering and compression
techniques are explained in
Section~\ref{sec:sha}. Section~\ref{sec:exp} gives experimental
results on various kinds of networks, and Section~\ref{sec:con}
concludes the paper.

\section{Background}\label{sec:bac}

Let $G = (V,E)$ be a network modeled as a graph with $n$ vertices and
$m$ edges where each node in the network is represented by a vertex in
$V$, and an interaction between two nodes is represented by an edge in
$E$. We assume that $\{v,v\} \notin E$ for any $v \in V$, i.e., $G$ is
{\em loop} free. Let $\adj{v}$ be the set of vertices which are
connected to $v$.

A graph $G' = (V', E')$ is a {\em subgraph} of $G$ if $V' \subseteq V$
and $E' \subseteq E$. A \textit{path} is a vertex sequence such that
there exists an edge between consecutive vertices. A path between two
vertices $s$ and $t$ is denoted by $s \leadsto t$.  Two vertices $u$
and $v$ in $V$ are \textit{connected} if there is a path from $u$ to
$v$.  If $u$ and $v$ are connected for all $u,v \in V$ we say $G$ is
\textit{connected}. If $G$ is not connected, then it is {\em
  disconnected} and each maximal connected subgraph of $G$ is a {\em
  connected component}, or a component, of $G$.

Given a graph $G = (V,E)$, an edge $e \in E$ is a {\em bridge}
if $G-e$ has more connected components than $G$ where $G-e$ is
obtained by removing $e$ from $E$. Similarly, a vertex $v \in V$ is
called an {\em articulation vertex} if $G-v$ has more connected
components than $G$ where $G-v$ is obtained by removing $v$ and its
edges from $V$ and $E$, respectively. If $G$ is connected and it does
not contain an articulation vertex we say $G$ is {\em biconnected}. A
maximal biconnected subgraph of $G$ is a {\em biconnected
component}. Hence, if $G$ is biconnected it has only one biconnected
component which is $G$ itself.

$G = (V,E)$ is a {\em clique} if and only if $\forall u,v \in V,
\{u,v\} \in E$. The subgraph {\em induced by} a subset of vertices $V'
\subseteq V$ is $G' = (V', E' = \{V' \times V'\} \cap E)$. A vertex $v
\in V$ is a {\em side vertex} of $G$ if and only if the subgraph of
$G$ induced by $\adj{v}$ is a clique. Two vertices $u$ and $v$ are
    {\em identical} if and only if $\adj{u} = \adj{v}$. $v$ is a
    {\em{degree}-1} vertex if and only if $|\adj{v}| = 1$.

\subsection{Betweenness Centrality}

The betweenness metric is first defined by Freeman in Sociology to
quantify a person's importance on other people's communication in a
social network~\cite{Freeman77}. Given a graph $G$, let $\sigma_{st}$
be the number of shortest paths from a source $s \in V$ to a target $t
\in V$. Let $\sigma_{st}(v)$ be the number of such $s \leadsto t$
paths passing through a vertex $v \in V, v\neq s,t$. Let the {\em pair
  dependency} of $v$ to $s,t$ pair be the fraction $\delta_{st}(v) =
\frac{\sigma_{st}(v)}{\sigma_{st}}$. The betweenness centrality of $v$
is defined as
\begin{equation}
	\bc{v} = \sum_{s \neq v \neq t \in V} \delta_{st}(v). \label{eq:first}
\end{equation}	

Since there are $\mathcal{O}(n^2)$ pairs in $V$, one needs
$\mathcal{O}(n^3)$ operations to compute $\bc{v}$ for all $v \in V$ by
using~(\ref{eq:first}). Brandes reduced this complexity and proposed
an $\mathcal{O}(mn)$ algorithm for unweighted
networks~\cite{brandes2001}. The algorithm is based on the
accumulation of pair dependencies over target vertices. After
accumulation, the dependency of $v$ to $s \in V$ is
\begin{equation} \label{eq:pair}
\delta_{s}(v) = \sum_{t \in V} \delta_{st}(v).
\end{equation}

Let ${\tt P}_s(u)$ be the set of $u$'s predecessors on the shortest
paths from $s$ to all vertices in $V$. That is,
$${\tt P}_s(u) = \{v \in V: \{u,v\} \in E, {\tt d}_s(u) = {\tt d}_s(v)
+ 1\}$$ where ${\tt d}_s(u)$ and ${\tt d}_s(v)$ are the shortest
distances from $s$ to $u$ and $v$, respectively. ${\tt P}_s$
defines the {\em shortest paths graph} rooted in $s$. Brandes observed
that the accumulated dependency values can be computed recursively as
\begin{equation}
\delta_{s}(v) = \sum_{u: v \in {\tt P}_s(u)} \frac{\sigma_{sv}}{\sigma_{su}} \left(1 + \delta_{s}(u)\right). \label{eq:recursion}
\end{equation}

To compute $\delta_{s}(v)$ for all $v \in V \setminus \{s\}$, Brandes'
algorithm uses a two-phase approach. First, to compute $\sigma_{sv}$
and ${\tt P}_s(v)$ for each $v$, a breadth first search~(BFS) is
initiated from $s$. Then in a {\it back propagation} phase,
$\delta_{s}(v)$ is computed for all $v \in V$ in a bottom-up manner by
using~(\ref{eq:recursion}).  Each phase takes a linear time, and hence
this process takes $\mathcal{O}(m)$ time. Since there are $n$ source
vertices and the phases are repeated for each source vertex, the total
complexity of the algorithm is $\mathcal{O}(mn)$. The pseudo-code of
Brandes' betweenness centrality algorithm is given in
Algorithm~\ref{alg:brandes}.

\begin{algorithm}
\DontPrintSemicolon
\SetKwComment{tcp}{$\triangleright$}{}
\small
\caption{\bcalg}
\label{alg:brandes}
\KwData{${G = (V,E)}$}  
  $\bc{v} \leftarrow 0, \forall v \in V$ \;
  \For{{\bf each} $s \in V$} {    
    $S \leftarrow$ empty stack\;
    $Q \leftarrow$ empty queue \;    

    $\pre{v} \leftarrow$ empty list$, \forall v \in V$ \;
    $\sig{v} \leftarrow 0, \forall v \in V$ \;
    $\dis{v} \leftarrow -1, \forall v \in V$ \;

    $Q$.push($s$);\ \  $\sig{s} \leftarrow 1$;\ \  $\dis{s} \leftarrow 0$ \;

    \tcp{Phase $1$: BFS from $s$}    
    \While{$Q$ is not empty} {
      $v \leftarrow Q$.pop() \;
      $S$.push($v$) \;
      \For{{\bf all} $w \in \adj{v}$}{
        \If{$\dis{w} < 0$}{
          $Q$.push($w$) \;
          $\dis{w} \leftarrow \dis{v} + 1$ \;
        }
        \If{$\dis{w} = \dis{v} + 1$}{
          $\sig{w} \leftarrow \sig{w} + \sig{v}$ \;
          $\pre{w}$.push($v$) \;
        }
      }
    }

    \tcp{Phase $2$: Back propagation}
    $\del{v} \leftarrow 0, \forall v \in V$ \;

    \While{$S$ is not empty}{
      $w \leftarrow S$.pop() \;
      \For{$v \in P[w]$}{
        $\del{v} \leftarrow \del{v} + \frac{ \sigma[v] }{ \sig{w}
        } (1+ \del{w})$ \;
      }
      \If{ $w \neq s$}{
        $\bc{w} \leftarrow \bc{w} + \del{w}$ \;
      }
    }
  }
  \Return{{\tt bc}} \;
\end{algorithm}

\section{Shattering and Compressing Networks}
\label{sec:sha}

\subsection{Principle}

Let us start with a simple example: Let $G = (V,E)$ be a binary tree
with $n$ vertices hence $m = n - 1$. If Brandes' algorithm is used the
complexity of computing the BC scores is $\mathcal{O}(n^2)$. However,
by using a structural property of $G$, one can do much better: there
is exactly one path between each vertex pair in $V$. Hence for a
vertex $v \in V$, $\bc{v}$ is the number of (ordered) pairs
communicating via $v$, i.e.,
$$\bc{v} = 2 \times \left((l_vr_v) + (n - l_v - r_v - 1) (l_v +
r_v)\right)$$ where $l_v$ and $r_v$ are the number of vertices in the
left and the right subtrees of $v$, respectively. Since $l_v$ and
$r_v$ can be computed in linear time for all $v \in V$, this approach,
which can be easily extended to an arbitrary tree, takes only
$\mathcal{O}(n)$ time.

As mentioned in Section~\ref{sec:int}, computing BC scores is an
expensive task. However, as the above example shows, some structural
properties of the networks can be effectively used to reduce the
complexity. Unfortunately, an $n$-fold improvement on the execution
time is usually not possible since real-life networks rarely have a
tree-like from.  However, as we will show, it is still possible to reduce
the execution time by using a set of special vertices and edges.

\begin{figure}[htbp]
\begin{center}
\subfigure[A toy social network with various types of vertices: Arthur
is an articulation vertex, Diana is a side vertex, Jack and Martin are
degree-1 vertices, and Amy and May are identical
vertices.]{\includegraphics[width=0.43\textwidth]{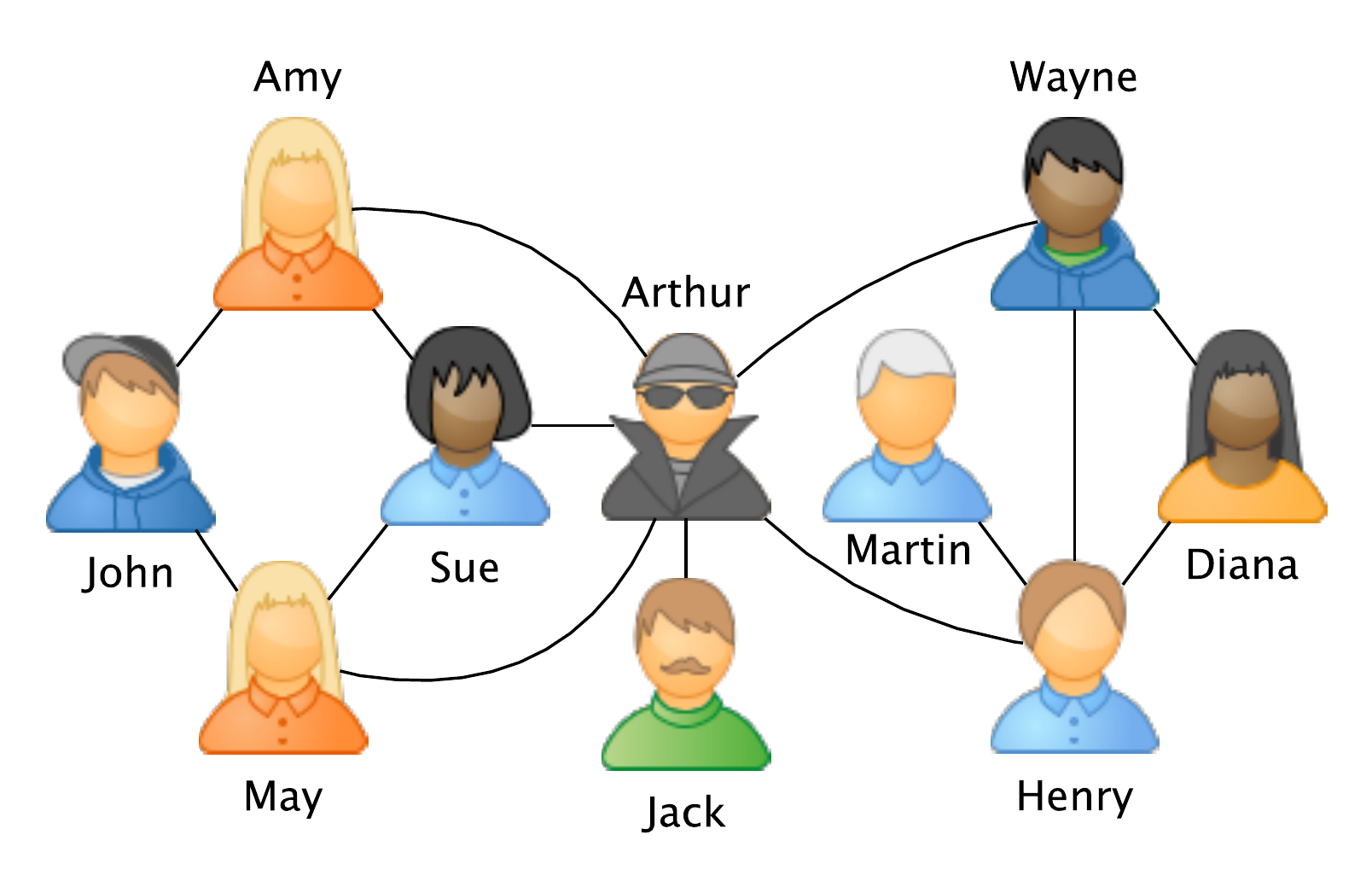}}
\hspace*{3ex}
\subfigure[The network shattered at Arthur to three
components.]{\includegraphics[width=0.43\textwidth]{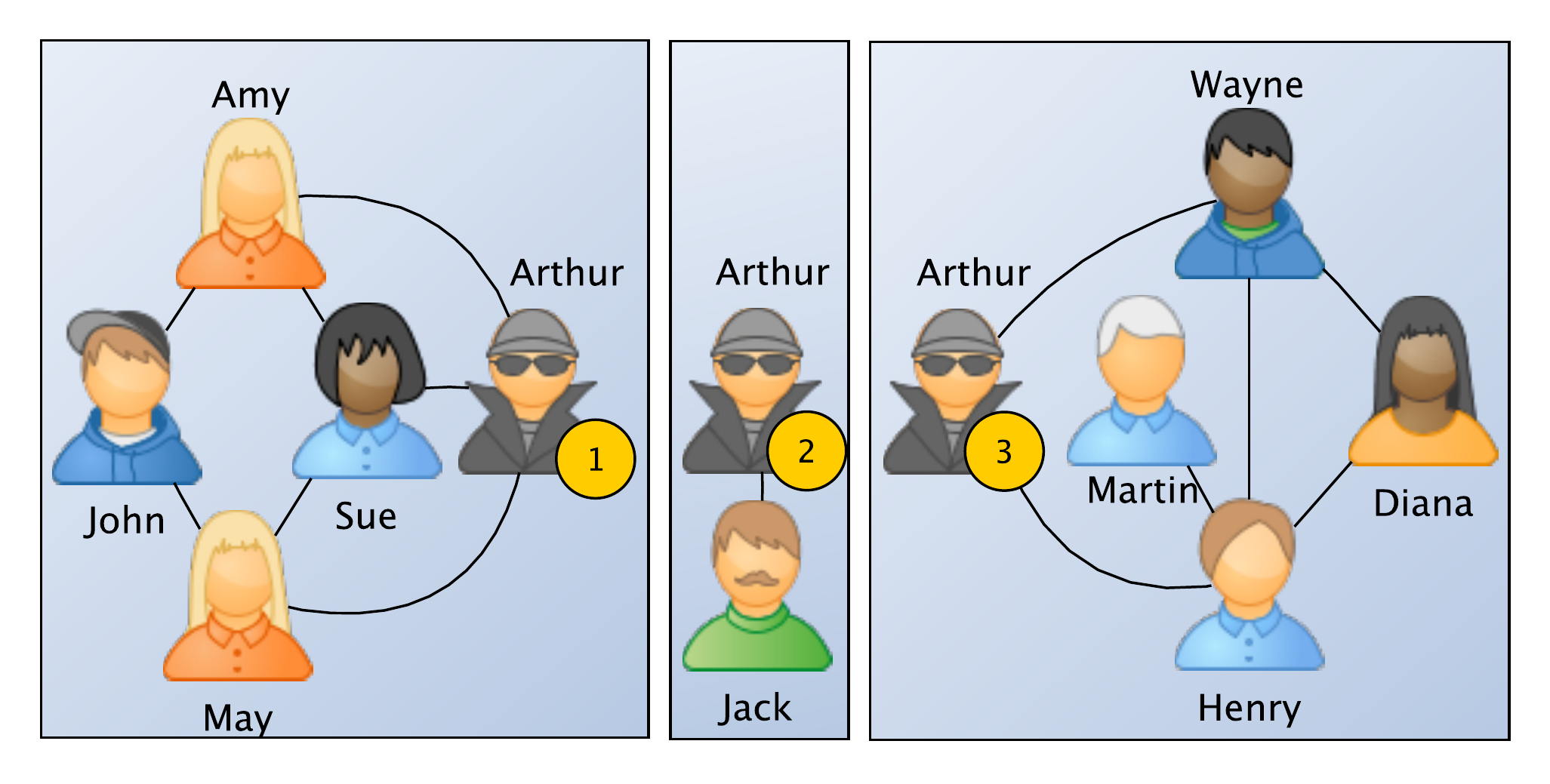}}
\caption{A toy social network and its shattered form due to an
  articulation vertex.}
\label{fig:social}
\end{center}
\end{figure}

Consider the toy graph $G$ of a social network given in
Figure~\ref{fig:social}.(a). Arthur is an articulation
vertex in $G$ and  he is responsible from all inter-communications among
three (biconnected) components as shown in
Figure~\ref{fig:social}.(b). Let $s$ and $t$ be two vertices which lie
in different components.  For all such $s,t$ pairs, the pair
dependency of Arthur is $1$. Since shattering the graph at Arthur removes all $s
\leadsto t$ paths, one needs to keep
some information to correctly update the BC scores of the vertices
inside each component, and this can be achieved creating local copies
of Arthur in each component.

In addition to shattering a graph $G$ into pieces, we investigated
three compression techniques using degree-1 vertices, side vertices,
and identical vertices. These vertices have special properties: All
degree-1 and side vertices always have a zero BC score since they
cannot be on a shortest path unless they are one of the
endpoints. Furthermore, $\bc{u}$ is equal to $\bc{v}$ for two
identical vertices $u$ and $v$. By using these observations, we will
formally analyze the proposed shattering and compression
techniques and provide formulas to compute the BC scores correctly.

We apply our techniques in a preprocessing phase as follows: Let $G =
G_0$ be the initial graph, and $G_\ell$ be the graph after the $\ell$th
shattering/compression operation.  Without loss of generality, we
assume that the initial graph $G$ is connected. The $\ell+1$th
operation modifies a single connected component of $G_\ell$ and
generates $G_{\ell+1}$. The preprocessing phase then checks if
$G_{\ell+1}$ is amenable to further modification, and if this is the
case, it continues. Otherwise, it terminates and the final BC
computation begins.

\subsection{Shattering Graphs}

To correctly compute the BC scores after shattering a graph, we assign
a ${\tt reach}$ attribute to each vertex. Let $G = (V,E)$. Let $v'$ be
a vertex in the shattered graph $G'$ and $C'$ be its component. Then
$\reach{v'}$ is the number of vertices of $G$ which are represented by
$v'$ in $C'$. For instance in Figure~\ref{fig:social}.(b),
$\reach{Arthur_{3}}$ is 6 since Amy, John, May, Sue, Jack, and Arthur
have the same shortest path graphs in the right component.  At the
beginning, we set $\reach{v} = 1$ for all $v \in V$.

\subsubsection{Shattering with articulation vertices}

Let $u'$ be an articulation vertex detected in a connected component
$C \subseteq G_\ell$ after the $\ell$th operation of the preprocessing
phase. We first shatter $C$ into $k$ (connected) components $C_i$ for
$1 \leq i \leq k$ by removing $u'$ from $G_\ell$ and adding a local
copy $u'_i$ of $u'$ to each component by connecting it to the same
vertices $u$ was connected.
The ${\tt reach}$ values for each local copy is set as
\begin{equation}
\reach{u'_i} = \sum_{v' \in C \setminus C_i}{\reach{v'}}
\label{eq:art:update}
\end{equation}
for $1 \leq i \leq k$.  We will use ${\bf org}(v')$ to denote the
mapping from $V'$ to $V$, which maps a local copy $v' \in V'$ to the corresponding
original copy in $V$.

For each component $C$, formed at any time of the preprocessing phase,
a vertex $s \in V$ has exactly one {\em representative} $u' \in C$
such that $\reach{u'}$ is incremented by one due to $s$.
This vertex is denoted as ${\bf rep}(C,s)$. Note that each copy is a
representative of its original. And if ${\bf rep}(C,s) = u'$ and $t' \neq
u'$ is another vertex in $C$ then ${\bf org}(u')$ is on all $s \leadsto
{\bf org}(t')$ paths in $G$.

\begin{algorithm}
\DontPrintSemicolon
\SetKwComment{tcp}{$\triangleright$}{}
\small
\caption{\bcalgwg}
  \label{algo:bcwg}
\KwData{${G' = (V', E')}$ and ${\tt reach}$}
  $\bcp{v} \leftarrow 0, \forall v \in V'$ \;
  \For{{\bf each} $s \in V'$}{
    $\cdots$ \tcp{same as \bcalg}
    \While{$Q$ is not empty}{
      $\cdots$ \tcp{same as \bcalg}
    }
    \lnl{ln:middle}$\del{v} \leftarrow \reach{v} - 1, \forall v \in V'$ \;
    \While{$S$ is not empty}{
      $w \leftarrow S$.pop() \;     
      \For{$v \in \pre{w}$}{
        \lnl{ln:depen}$\del{v} \leftarrow \del{v} + \frac{ \sig{v} }{ \sig{w}} (1+ \del{w})$ \;
      }
      \If{ $w \neq s$}{
        \lnl{ln:source}$\bcp{w} \leftarrow \bcp{w} + (\reach{s} * \del{w})$ \;
      }
    }
  }
  \Return {\tt bc'}

\end{algorithm}

Algorithm~\ref{algo:bcwg} computes the BC scores of the vertices in a
shattered graph. Note that the only difference w.r.t. $\bcalg$ are
lines~\ref{ln:middle} and~\ref{ln:source}. Furthermore, if $\reach{v}
= 1$ for all $v \in V$ the algorithms are equivalent. Hence the worst
case complexity of $\bcalgwg$ is also $\mathcal{O}(mn)$ for a graph
with $n$ vertices and $m$ edges.

Let $G = (V,E)$ be the initial graph with $n$ vertices and $G' =
(V',E')$ be the shattered graph after preprocessing.  Let ${\tt bc}$
and ${\tt bc'}$ be the BC scores computed by $\bcalg(G)$ and
$\bcalgwg(G')$, respectively. We will prove that
\begin{equation}
\bc{v} = \sum_{v' \in V' | {\bf org}(v') = v}{\tt bc}'[v'], \label{eq:toprove}
\end{equation} 
when the graph is shattered at articulation vertices. That is,
$\bc{v}$ is distributed to $\bcp{v'}$s where $v'$ is an arbitrary copy
of $v$. Let us start with two lemmas.

\begin{lemma}\label{lem:one}
Let $u,v,s$ be vertices such that all $s \leadsto v$ paths
contain $u$. Then,
$$\delta_{s}(v) = \delta_{u}(v).$$
\end{lemma}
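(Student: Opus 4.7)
The plan is to prove the lemma term-by-term, i.e.\ to show that $\delta_{st}(v) = \delta_{ut}(v)$ for every $t \in V$, and then sum over $t$ to conclude $\delta_{s}(v)=\delta_{u}(v)$. The key observation I would exploit is that the hypothesis "all $s \leadsto v$ paths contain $u$" is much stronger than saying $u$ lies on every shortest $s \leadsto v$ path: it says $u$ is a vertex separator between $s$ and $v$, so removing $u$ disconnects them. This separator property is what drives the whole argument and should be invoked explicitly upfront.

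Given the separator property, I would split the vertex set according to which side of $u$ each candidate $t$ lies on. Precisely, I would define $T_s$ to be the set of vertices reachable from $s$ in $G-u$, and $T_u = V \setminus T_s$ (which contains $v$). For any $t \in T_s$ I expect both $\sigma_{st}(v)$ and $\sigma_{ut}(v)$ to vanish: a shortest $s \leadsto t$ path through $v$ would have its $s \leadsto v$ prefix pass through $u$ by hypothesis, and then the $v \leadsto t$ suffix would also have to pass through $u$ (since $v$ and $t$ are on opposite sides of the separator), contradicting simplicity of shortest paths; the same revisit-of-$u$ argument kills $\sigma_{ut}(v)$. So both pair dependencies are $0$ on this side.

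For $t \in T_u$, every $s \leadsto t$ path must cross the separator $u$, so in particular $d(s,t)=d(s,u)+d(u,t)$ and $\sigma_{st} = \sigma_{su}\cdot\sigma_{ut}$. Likewise every shortest $s \leadsto v$ path factors as $s \leadsto u \leadsto v$, giving $\sigma_{sv} = \sigma_{su}\cdot\sigma_{uv}$ and $d(s,v) = d(s,u)+d(u,v)$. I would then show the equivalence "$v$ lies on some shortest $s \leadsto t$ path iff $v$ lies on some shortest $u \leadsto t$ path": the forward direction uses $d(s,t)=d(s,v)+d(v,t)$ together with $d(s,v)=d(s,u)+d(u,v)$ and the triangle inequality to squeeze $d(u,t)=d(u,v)+d(v,t)$; the reverse direction uses $d(s,t)=d(s,u)+d(u,t)$ (valid here because $t\in T_u$) to glue shortest paths. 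Once the equivalence is in hand, whenever $v$ is on a shortest $s \leadsto t$ path one has $\sigma_{st}(v) = \sigma_{sv}\sigma_{vt} = \sigma_{su}\sigma_{uv}\sigma_{vt} = \sigma_{su}\cdot\sigma_{ut}(v)$, and dividing by $\sigma_{st}=\sigma_{su}\sigma_{ut}$ yields $\delta_{st}(v)=\delta_{ut}(v)$; when $v$ is not on such a path, both quantities are $0$.

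The degenerate cases $t\in\{s,u,v\}$ should be dispatched separately but easily (the endpoint conventions make all terms $0$, again using the separator property to rule out shortest $s\leadsto u$ or $u\leadsto s$ paths through $v$). The step I expect to require the most care is the equivalence "$v$ on a shortest $s\leadsto t$ path iff $v$ on a shortest $u\leadsto t$ path," since it is tempting to assume this holds generally when in fact one must use the separator hypothesis to get $d(s,t)=d(s,u)+d(u,t)$ for $t\in T_u$ before the two notions coincide. Summing the term-by-term equality $\delta_{st}(v)=\delta_{ut}(v)$ over all $t$ then gives the lemma by definition of $\delta_s(v)$ and $\delta_u(v)$.
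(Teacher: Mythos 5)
Your proof is correct and follows essentially the same route as the paper's: establish $\delta_{st}(v)=\delta_{ut}(v)$ for each target $t$ via the factorizations $\sigma_{st}=\sigma_{su}\sigma_{ut}$ and $\sigma_{st}(v)=\sigma_{su}\sigma_{ut}(v)$, then sum over $t$ using Eq.~\eqref{eq:pair}. Your write-up is in fact more careful than the paper's one-line argument, which silently assumes that $\sigma_{st}(v)>0$ forces every $s \leadsto t$ path through $u$ and omits the check that both pair dependencies vanish for targets on $s$'s side of the separator --- exactly the points you spell out.
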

\begin{proof}
For any target vertex $t$, if $\sigma_{st}(v)$ is positive then
$$\delta_{st}(v) = \frac{\sigma_{st}(v)}{\sigma_{st}} =
\frac{\sigma_{su}\sigma_{ut}(v)}{\sigma_{su}\sigma_{ut}} =
\frac{\sigma_{ut}(v)}{\sigma_{ut}} = \delta_{ut}(v)$$ since all $s
\leadsto t$ paths are passing through $u$. According
to~\eqref{eq:pair}, $\delta_{s}(v) = \delta_{u}(v)$.
\end{proof}

\begin{lemma}\label{lem:two}
For any vertex pair $s,t \in V$, there exists exactly one component
$C$ of $G'$ which contains a copy of $t$ which is not the same vertex
as the representative of $s$ in $C$.
\end{lemma}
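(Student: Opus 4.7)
The plan is to prove this by induction on the number of shattering operations $\ell$ used to obtain $G'$ from $G$, implicitly assuming $s\neq t$ (every component of $G'$ contains at most one copy of any vertex in $V$, so the $s=t$ case degenerates and is uninteresting). The base case $\ell=0$ is immediate: $G'=G$ has a single component, $\mathbf{rep}(G,s)=s$, the unique copy of $t$ is $t$ itself, and $t\neq s$.

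For the inductive step, let the $(\ell{+}1)$th operation shatter a single component $C$ of $G_\ell$ at an articulation vertex $u$ into pieces $C_1,\dots,C_k$; every other component is untouched. The workhorse is the following auxiliary fact, which I would prove first from the definitions of $\mathbf{rep}$ and the update rule~\eqref{eq:art:update}: for every vertex $v\in V$, if $\mathbf{rep}(C,v)\in C_i$ then $\mathbf{rep}(C_i,v)=\mathbf{rep}(C,v)$ and $\mathbf{rep}(C_j,v)=u'_j$ for all $j\neq i$; if instead $\mathbf{rep}(C,v)=u$ itself, then $\mathbf{rep}(C_i,v)=u'_i$ for every $i$. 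This pinpoints exactly how the representative of any fixed vertex gets redistributed among the new pieces. An analogous (and simpler) statement describes how the copies of $t$ in $C$ redistribute: if $C$ contained a copy $t^\circ$ of $t$ with $t^\circ\neq u$, that copy now lies in the unique $C_i$ whose underlying vertex set contains it; if $t=u$, then each $C_i$ receives a fresh copy $u'_i$ of $t$; otherwise no $C_i$ gains a copy of $t$.

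With these dictionaries in hand, the proof splits according to whether the unique component $C^*$ guaranteed by the induction hypothesis equals $C$ or not. If $C^*\neq C$ then $C^*$ is unchanged and still qualifies; I need to check that no $C_i$ newly qualifies, which follows by combining the two dictionaries above with the IH applied to $C$ (every copy of $t$ in $C$ equals $\mathbf{rep}(C,s)$, possibly vacuously). If $C^*=C$, I show exactly one $C_i$ inherits the qualifying pair: the copy of $t$ singled out by the second dictionary lives in one specific piece, and that piece's representative of $s$ is determined by the first dictionary, yielding inequality in precisely one piece. The main obstacle will be the corner cases where $u$ coincides with $s$ or $t$, or where $\mathbf{rep}(C,s)=u$; these are exactly the cases where the second branch of the first dictionary kicks in, and the cleanest way to avoid duplication is to handle $u\in\{s,t,\mathbf{rep}(C,s)\}$ uniformly by using the ``$\mathbf{rep}(C,v)=u$'' branch of that dictionary rather than treating each coincidence separately.
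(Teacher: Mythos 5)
Your proposal is correct and follows essentially the same route as the paper: induction on the number of shattering operations, using the update rule~\eqref{eq:art:update} to track how the representative of $s$ and the copies of $t$ redistribute among the pieces $C_1,\dots,C_k$, combined with the inductive fact that in every component other than the distinguished one each copy of $t$ \emph{is} the representative of $s$. Your case analysis is somewhat more explicit than the paper's (which focuses on shattering at a copy of $t$ and leaves the remaining cases implicit), but the underlying argument is the same.
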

\begin{proof}
Given $s,t \in V$, the statement is true for the initial (connected)
graph $G$ since it contains one copy of each vertex. Assume that it is
also true after $\ell$th shattering and let $C$ be this
component. When $C$ is further shattered via $t$'s copy, all but one
newly formed (sub)components contains a copy of $t$ as the
representative of $s$. For the remaining component $C'$, ${\bf
rep}(C',s) = {\bf rep}(C,s)$ which is not a copy of $t$. 

For all components other than $C$, which contain a copy $t'$ of $t$,
the representative of $s$ is $t'$ by the inductive assumption. When
such components are further shattered, the representative of $s$ will
be again a copy of $t$. Hence the statement is true for $G_{\ell
+ 1}$, and by induction, also for $G'$.
\end{proof}

\begin{theorem} \label{thm:art}
Eq.~\ref{eq:toprove} is correct after shattering $G$ with articulation
vertices. 
\end{theorem}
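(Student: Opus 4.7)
I would prove this by induction on the number of shattering operations $\ell$ performed during preprocessing. The base case $\ell = 0$ is immediate: $G' = G$, every reach value equals $1$, and \bcalgwg\ coincides line by line with \bcalg, so~\eqref{eq:toprove} is just the definition of betweenness centrality. For the inductive step I would fix a single shattering operation that partitions a component $C$ of $G_\ell$ at articulation vertex $u'$ into sub-components $C_1, \dots, C_k$ with local copies $u'_1, \dots, u'_k$ and updated reach values from~\eqref{eq:art:update}, and verify that $\sum_{v':\, {\bf org}(v')=v} \bcp{v'}$ is preserved by this step.

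The engine of the argument is Lemma~\ref{lem:two}, which guarantees that every ordered pair $(s,t) \in V \times V$ is ``owned'' by exactly one sub-component $C^*$, namely the one containing a copy $t^*$ of $t$ with $t^* \neq s^* := {\bf rep}(C^*, s)$. For each such pair I would trace $\delta_{st}(v)$ through \bcalgwg: if $v$ has a copy $v^* \in C^*$, then the source-$s^*$ iteration on $C^*$ accumulates $\delta_{st}(v)$ into $\bcp{v^*}$ via the recursion on line~\ref{ln:depen}, with the multiplier $\reach{s^*}$ on line~\ref{ln:source} and the seeds $\del{v} \leftarrow \reach{v} - 1$ on line~\ref{ln:middle} unfolding the single execution into contributions for all original sources and targets collapsed into $s^*$ and $t^*$. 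Otherwise $v$ lies in a different sub-component $C_j$, every $s \leadsto t$ path in $G$ traverses $u'$, and the proof of Lemma~\ref{lem:one} yields $\delta_{st}(v) = \delta_{u' t}(v)$; this quantity is charged to $v$'s copy when \bcalgwg\ iterates on $C_j$ with a source whose representative equals $s$, through the reach-based seeding at $u'_j$.

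The main obstacle is the reach bookkeeping: verifying that the additive initialization on line~\ref{ln:middle} and the source multiplier on line~\ref{ln:source} together unfold exactly the right collection of original pairs. I would introduce an auxiliary claim, provable by induction on the reverse-BFS ordering, that at the end of the back-propagation from source $s^*$ in $C^*$ the variable $\del{w}$ equals $\sum_{t^* \in C^* \setminus \{s^*, w\}} \reach{t^*}\, \delta_{s^* t^*}(w)$. Multiplying by $\reach{s^*}$ on line~\ref{ln:source}, summing over all sources in $C^*$, and then summing over all owning components via Lemma~\ref{lem:two}, reassembles $\sum_{s,t} \delta_{st}(v) = \bc{v}$ on the right-hand side of~\eqref{eq:toprove}, closing the induction.
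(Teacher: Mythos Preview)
Your overall strategy matches the paper's: both arguments rest on Lemma~\ref{lem:one} and Lemma~\ref{lem:two} to route each pair dependency $\delta_{st}(v)$ to exactly one component and one source iteration of \bcalgwg. The paper works directly on the final shattered graph $G'$ rather than inducting on the number of operations, and instead of your component-local weighted sum it identifies the post--back-propagation value $\del{v'}$ with the dependency $\delta_{{\bf org}(s')}({\bf org}(v'))$ computed in the \emph{original} graph $G$; but the logical skeleton is the same.

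There are, however, two concrete slips. First, your auxiliary claim asserts that after back-propagation from $s^*$ in $C^*$ one has $\del{w} = \sum_{t^* \in C^* \setminus \{s^*, w\}} \reach{t^*}\,\delta_{s^* t^*}(w)$. This drops the seed from line~\ref{ln:middle}: for a BFS leaf $w$ the right-hand side is zero while the algorithm leaves $\del{w}=\reach{w}-1$. The correct identity is
\[
\del{w} \;=\; (\reach{w}-1)\;+\;\sum_{t^*\in C^*\setminus\{s^*\}}\reach{t^*}\,\delta_{s^* t^*}(w),
\]
the extra term accounting for the $\reach{w}-1$ original targets $t\neq{\bf org}(w)$ with ${\bf rep}(C^*,t)=w$, for which $\delta_{st}({\bf org}(w))=1$. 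You clearly understand this role of the seed one paragraph earlier, so the repair is mechanical, but as stated your reassembly step would miss exactly these pairs and fail to recover $\bc{v}$. Second, in your case ``$v$ lies in a different sub-component $C_j$'' the identity you want from Lemma~\ref{lem:one} is $\delta_{st}(v)=\delta_{su'}(v)$ (replace the endpoint on the far side of the articulation vertex from $v$), not $\delta_{u't}(v)$: the latter is zero, since no shortest $u'\leadsto t$ path enters $C_j$. With these two fixes your argument goes through and is equivalent to the paper's.
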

\begin{proof}
Let $C$ be a component of $G'$, $s', v'$ be two vertices in $C$, and
$s, v$ be the corresponding original vertices in $V$,
respectively. Note that $\reach{v'} - 1$ is the number of vertices $t
\neq v$ such that $t$ does not have a copy in $C$ and $v$ lies on all
$s \leadsto t$ paths in $G$. For all such vertices, $\delta_{st}(v) =
1$, and the total dependency of $v'$ to all such $t$ is $\reach{v'} - 1$.
When the BFS is started from $s'$, line~\ref{ln:middle} of $\bcalgwg$
 initiates $\del{v'}$ with this value and computes the final
 $\del{v'} = \delta_{s'}(v')$. This is exactly the same dependency
 $\delta_s(v)$ computed by $\bcalg$.  

Let $C$ be a component of $G'$, $u'$ and $v'$ be two vertices in $C$,
 and $u = {\bf org}(u')$, $v = {\bf org}(v')$. According to the above
 paragraph, , $\delta_u(v) = \delta_{u'}(v')$ where $\delta_u(v)$ and
 $\delta_{u'}(v')$ are the dependencies computed by $\bcalg$ and
 $\bcalgwg$, respectively. Let $s \in V$ be a vertex, s.t. ${\bf
 rep}(C,s) = u'$. According to Lemma~\ref{lem:one}, $\delta_s(v) =
 \delta_u(v) = \delta_{u'}(v')$. Since there are $\reach{u'}$ vertices
 represented by $u'$ in $C$, the contribution of the BFS from $u'$ to
 the BC score of $v'$ is $\reach{u'} \times \delta_{u'}(v')$ as shown
 in line~\ref{ln:source} of $\bcalgwg$. Furthermore, according to
 Lemma~\ref{lem:two}, $\delta_{s'}(v')$ will be added to exactly one
 copy $v'$ of $v$. Hence,~\eqref{eq:toprove} is correct.
\end{proof}

\subsubsection{Shattering with bridges}

Although the existence of a bridge implies the existence of an
articulation vertex, handling bridges are easier and only requires the
removal of the bridge. We embed this operation to the preprocessing
phase as follows: Let $G_\ell$ be the shattered graph obtained after
$\ell$ operations, and let $\{u',v'\}$ be a bridge in a component
$C$ of $G_\ell$. Hence, $u'$ and $v'$ are both articulation vertices. Let $u =
{\bf org}(u')$ and $v = {\bf org}(v')$. A bridge removal operation is
similar to a shattering via an articulation vertex, however, no new
copies of $u$ or $v$ are created. Instead, we let $u'$ and $v'$ act as
a copy of $v$ and $u$.

Let $C_u$ and $C_v$ be the components formed after removing
edge $\{u',v'\}$ which contain $u'$ and $v'$, respectively. Similar
to~\eqref{eq:art:update}, we add
\begin{equation}
\sum_{w \in C_v}{\reach{w}}\notag \mbox{\ \ \ \ and\ \ \ \ } \sum_{w
  \in C_u}{\reach{w}}\label{eq:bri:update}
\end{equation}
to $\reach{u'}$ and $\reach{v'}$, respectively, to make $u'$~($v'$) as the
representative of all vertices in $C_u$~($C_v$).

After removing the bridge and updating the ${\tt reach}$ array,
Lemma~\ref{lem:two} is not true: there cannot be a component which
contain a representative of $u$~($v$) and a copy of $v$~($u$)
anymore. Hence, $\delta_v(u)$ and $\delta_u(v)$ will not be added to
any copy of $u$ and $v$, respectively, by $\bcalgwg$. To alleviate
this, we add
\begin{align}
\delta_{v'}(u') &= \left(\left(\sum_{w \in C_u}{\reach{w}}\right) - 1\right)\sum_{w \in C_v}{\reach{w}}\notag,\\
\delta_{u'}(v') &= \left(\left(\sum_{w \in C_v}{\reach{w}}\right) - 1\right)\sum_{w \in C_u}{\reach{w}}\notag
\end{align}
to ${\tt bc'}[u']$ and ${\tt bc'}[v']$, respectively. Note that Lemma~\ref{lem:two}
is true for all other vertex pairs.

\begin{corr}
Eq.~\ref{eq:toprove} is correct after shattering $G$ with articulation
vertices and bridges.
\end{corr}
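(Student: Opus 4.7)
The plan is to reduce the corollary to Theorem~\ref{thm:art} by induction on the sequence of preprocessing operations, applying that theorem to each articulation-vertex step and verifying the bridge-removal step directly. Fix a bridge $\{u',v'\}$ in a component $C$ of $G_\ell$, let $u = {\bf org}(u')$ and $v = {\bf org}(v')$, let $C_u, C_v$ be the two resulting subcomponents, and set $R_u = \sum_{w \in C_u}\reach{w}$, $R_v = \sum_{w \in C_v}\reach{w}$ at their values before the update. After the bridge-removal update, $u'$ becomes the unique representative in $C_u$ of every original previously represented in $C_v$ (including $v$ itself), and symmetrically for $v'$ in $C_v$.

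The core of the argument is to identify exactly the pair dependencies that this bridge-removal step newly causes $\bcalgwg$ to miss. For any intermediate $w \in V \setminus \{u,v\}$, Lemma~\ref{lem:two} is unaffected and the dependency argument of Theorem~\ref{thm:art} still accumulates $\delta_{st}(w)$ into a copy of $w$ exactly once; so any missed contribution must have intermediate $u$ or $v$. By symmetry consider $u$: since $u'$ is the unique copy of $u$ among the components derived from $C$ and $\bcp{u'}$ is not updated by BFSes rooted at $u'$, a contribution $\delta_{st}(u)$ is missed precisely when ${\bf rep}(C_u,s) = u'$ and $t$ has a copy in $C_u$ distinct from $u'$. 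The contributions \emph{newly} missed by the current step are exactly those with $s$ one of the $R_v$ originals previously represented in $C_v$ (now absorbed into $u'$) and $t$ one of the $R_u - 1$ $u$-side originals other than $u$. For every such $(s,t)$ the shortest $s \leadsto t$ path must cross the removed bridge and hence passes through $u$, so $\delta_{st}(u) = 1$.

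The main obstacle is confirming that the $R_v(R_u-1)$ newly missed contributions coincide exactly with the correction $(R_u-1)R_v$ added to $\bcp{u'}$ at the bridge-removal step, and that the symmetric count for intermediate $v$ matches the correction $(R_v-1)R_u$ added to $\bcp{v'}$. Combined with the inductive hypothesis for all contributions carried over from previous operations and for all other intermediates, these corrections establish Eq.~\ref{eq:toprove} after the bridge-removal step and close the induction.
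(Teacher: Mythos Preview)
Your proposal is correct and follows essentially the same approach as the paper: the paper's argument (the paragraph immediately preceding the corollary) observes that Lemma~\ref{lem:two} fails only for the bridge endpoints $u$ and $v$, so the proof of Theorem~\ref{thm:art} goes through for every other intermediate, and the explicit corrections $(R_u-1)R_v$ and $(R_v-1)R_u$ added to $\bcp{u'}$ and $\bcp{v'}$ supply exactly the missing source dependencies. Your version unpacks the same accounting at the pair-dependency level rather than the source-dependency level, but the structure and the key observation are identical; the one informality---saying the shortest $s\leadsto t$ path in $G$ ``crosses the removed bridge'' when the bridge lives in $G_\ell$---is harmless once you invoke the property (used in Theorem~\ref{thm:art}) that ${\bf org}$ of a representative lies on all relevant paths in $G$.
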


\subsection{Compressing Graphs}

Although, the compression techniques do not reduce the number of
connected components, they reduce the number of vertices and edges in
a graph. Since the complexity of Brandes' algorithm is
$\mathcal{O}(mn)$, a reduction on $m$ and/or $n$ will help to
reduce the execution time of the algorithm.

\subsubsection{Compression with degree-1 vertices}

Let $G_\ell$ be the graph after $\ell$ shattering operations, and let
$u' \in C$ be a degree-1 vertex in a component $C$ of $G_\ell$ which
is only connected to $v'$. Removing a degree-1 vertex from a graph is
the same as removing the bridge $\{u',v'\}$ from $G_\ell$. But this
time, we reduce the number of vertices and the graph is
compressed. Hence, we handle this case separately and set $G_{\ell +
1} = G_\ell - u'$. The updates are the same with the bridge
removal. That is, we add $\reach{u'}$ to $\reach{v'}$ and increase
${\tt bc'}[u']$ and ${\tt bc'}[v']$, respectively, with
\begin{align}
\delta_{v'}(u') &= \left(\reach{u'} - 1\right)\sum_{w \in C  \setminus \{u'\}}{\reach{w}},\notag\\
\delta_{u'}(v') &= \left(\left(\sum_{w \in C  \setminus \{u'\}}{\reach{w}}\right) - 1\right) \reach{u'}.\notag
\end{align}

\begin{corr}
Eq.~\ref{eq:toprove} is correct after shattering $G$ with articulation
vertices and bridges, and compressing it with degree-1 vertices.
\end{corr}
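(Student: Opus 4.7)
The plan is to reduce the degree-1 compression to the bridge case already covered by the preceding corollary. Since $u'$ has only one neighbor $v'$, the edge $\{u',v'\}$ is a bridge in $C$, and if we were to apply the bridge-removal rule it would split $C$ into $C_u = \{u'\}$ and $C_v = C \setminus \{u'\}$. Thus the degree-1 operation can be viewed as first removing the bridge $\{u',v'\}$ and then discarding the resulting singleton component $\{u'\}$. This reduction lets me reuse the bridge corollary rather than re-prove Lemmas~\ref{lem:one} and~\ref{lem:two} from scratch.

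With this viewpoint, I would first verify that the direct additions to $\bcp{u'}$ and $\bcp{v'}$ prescribed by the degree-1 rule coincide with those the bridge-removal rule would apply to $\{u',v'\}$. Substituting $C_u = \{u'\}$ gives $\sum_{w \in C_u}\reach{w} = \reach{u'}$ and $\sum_{w \in C_v}\reach{w} = \sum_{w \in C \setminus \{u'\}}\reach{w}$, so both formulas for $\delta_{v'}(u')$ and $\delta_{u'}(v')$ match the bridge formulas; the $\reach$ updates also agree, because the only nontrivial update is adding $\reach{u'}$ to $\reach{v'}$, while $\reach{u'}$ itself is never read again after $u'$ is removed.

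Next I would show that physically deleting $u'$ after this hypothetical bridge removal leaves every subsequent value computed by $\bcalgwg$ unchanged. Once $u'$ is isolated, no BFS rooted in $C_v$ reaches $u'$, so line~\ref{ln:source} never updates $\bcp{u'}$ from those sources; and the BFS rooted at $u'$ itself visits only $u'$, so the stack $S$ contains only $s = u'$ and its back-propagation phase contributes nothing due to the $w \neq s$ guard on line~\ref{ln:source}. Consequently, every $\bcp{w}$ for $w \in C_v$ is the same whether or not the isolated $u'$ is kept, and the entire final value of $\bcp{u'}$ is precisely the direct contribution $\delta_{v'}(u')$ added at compression time. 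Applying the bridge corollary to the hypothetical intermediate graph then delivers Equation~\ref{eq:toprove} for the degree-1--compressed graph.

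The main obstacle I expect is not any single computation but the inductive bookkeeping: I would need to state the invariants that the preprocessing phase maintains across operations, namely the ${\bf rep}$ and ${\bf org}$ structure, the $\reach$ consistency condition implicit in the proof of Theorem~\ref{thm:art}, and the analogues of Lemmas~\ref{lem:one} and~\ref{lem:two}, and then check that discarding the singleton $\{u'\}$ preserves them so that the corollary composes cleanly with any further shattering or compression steps on $G_{\ell+1}$.
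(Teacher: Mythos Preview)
Your proposal is correct and follows the same approach as the paper: both treat degree-1 removal as the special case of bridge removal where one side is the singleton $\{u'\}$, verify that the update formulas specialize accordingly, and then observe that the isolated vertex contributes nothing further. The paper in fact states the corollary without a separate proof, relying on the sentence ``Removing a degree-1 vertex from a graph is the same as removing the bridge $\{u',v'\}$''; your explicit check that a BFS from the isolated $u'$ adds nothing (because of the $w \neq s$ guard) and that no BFS in $C_v$ reaches $u'$ makes rigorous what the paper leaves implicit.
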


\subsubsection{Compression with side vertices}
Let $G_\ell$ be the graph after $\ell$ shattering and compression
operations, and let $u'$ be a side vertex in a component $C$ of
$G_\ell$. Since $\adj{u'}$ is a clique, there is no shortest path
passing through $u'$. That is, $u'$ is always on the sideways. Hence,
we can remove $u'$ from $G_\ell$ by only compensating the effect of
the shortest $s' \leadsto t'$ paths where $u'$ is either $s'$ or
$t'$. To alleviate this, we initiate a BFS from $u'$ as given in
Algorithm~\ref{alg:bcalgside}, which is similar to the ones in
$\bcalgwg$. The only difference between $\bcalgside$ and a BFS of
$\bcalgwg$ is an additional line~\ref{ln:bcadd}.

\begin{algorithm}
\DontPrintSemicolon \SetKwComment{tcp}{$\triangleright$}{} \small
\caption{\bcalgside}
\KwData{${G_\ell = (V_\ell, E_\ell)}$, a side vertex $s$, ${\tt reach}$, and ${\tt bc'}$}

    $\cdots$ \tcp{same as the BFS init. in \bcalgwg}
    \While{$Q$ is not empty}{
      $\cdots$ \tcp{same as BFS in \bcalgwg}
    }
     $\del{v} \leftarrow \reach{v} - 1, \forall v \in V_\ell$ \;
    \While{$S$ is not empty}{
      $w \leftarrow S$.pop() \;     
      \For{$v \in \pre{w}$}{
        $\del{v} \leftarrow \del{v} + \frac{ \sig{v} }{ \sig{w}} (1+ \del{w})$ \;
      }
      \If{ $w \neq s$}{
        \lnl{ln:bcaddorg}$\bcp{w} \leftarrow \bcp{w} + (\reach{s} * \del{w}) + $ \;
        \lnl{ln:bcadd}$\ \ \ \ \ \ \ \ \ \ \ (\reach{s} * (\del{w} - (\reach{w} - 1))$ \;
      }
    }
    \Return {\tt bc'}
  \label{alg:bcalgside}
\end{algorithm}

Removing $u'$ affects three types of dependencies: 
\begin{enumerate}
\item Let $s \in V$ be a vertex s.t. ${\bf rep}(C,s) = u'$ and let $v'$ be a
vertex in $C$ where $v = {\bf org}(v')$. Due to
Lemma~\ref{lem:two}, when we remove $u'$ from $C$, $\delta_s(v) =
\delta_{u'}(v')$ cannot be added anymore to any copy of
$v$. Line~\ref{ln:bcaddorg} of $\bcalgside$ solves this problem and
adds the necessary values to ${\tt bc'}(v')$.

\item Let $s \in V$ be a vertex s.t. ${\bf rep}(C, s) = v' \neq
u'$. If we remove $u'$ from $C$, due to Lemma~\ref{lem:two},
$\delta_s(u) = \delta_{v'}(u')$ will not be added to any copy of
$u$. Since, $u'$ is a side vertex, $\delta_{v'}(u') = \reach{u'} -
1$. Since there are $\sum_{v' \in C - u'}{\reach{v'}}$ vertices which
are represented by a vertex in $C - u'$, we add
$$(\reach{u'} - 1)\sum_{v' \in C - u'}{\reach{v'}}$$ to ${\tt bc}'[u']$
after removing $u'$ from $C$. 

\item Let $v', w'$ be two vertices in $C$ different than $u'$, and
  $v,w$ be the corresponding original vertices.  Although both
  vertices will keep existing in $C - u'$, since $u'$ will be removed,
  $\delta_{v'}(w')$ will be $\reach{u'}\times \delta_{v'u'}(w')$ less
  than it should be. For all such $v'$, the aggregated dependency will be
  $$\sum_{v' \in C, v' \neq w'} \delta_{v'u'}(w') = \delta_{u'}(w') -
  (\reach{w'} - 1),$$ since none of the $\reach{w'} - 1$ vertices
  represented by $w'$ lies on a $v' \leadsto u'$ path and
  $\delta_{v'u'}(w') = \delta_{u'v'}(w')$. The same dependency appears
  for all vertices represented by $u'$. Line~\ref{ln:bcadd} of
  $\bcalgside$ takes into account all these dependencies.
\end{enumerate}

\begin{corr}
Eq.~\ref{eq:toprove} is correct after shattering $G$ with articulation
vertices and bridges, and compressing it with degree-1 and side vertices.
\end{corr}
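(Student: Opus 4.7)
The plan is to prove the corollary by induction on the length of the preprocessing sequence $G_0=G,G_1,\ldots,G_k=G'$. The previous corollary already handles the base case as well as the inductive step whenever the $(\ell+1)$th operation is a shattering with an articulation vertex or a bridge, or a compression with a degree-1 vertex. So the only remaining case is when the $(\ell+1)$th operation removes a side vertex $u'$ from a component $C$ of $G_\ell$ and performs the updates of $\bcalgside$. For that case I would show that the output of $\bcalgwg$ on $G_\ell$ equals, vertex-by-vertex, the sum of the $\bcalgside$ compensations and the output of $\bcalgwg$ on $G_{\ell+1}$, so that combining with the inductive hypothesis preserves \eqref{eq:toprove}.

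The first technical step is a structural observation about side vertices. Because $\adj{u'}$ is a clique, any length-two walk $x'\to u'\to y'$ with $x',y'\in\adj{u'}$ can be short-circuited to the edge $x'\to y'$, so $u'$ is never internal to a shortest path in $G_\ell$. Hence removing $u'$ preserves the distance between every pair of surviving vertices, and in any BFS rooted at some $v'\in C\setminus\{u'\}$ the vertex $u'$ is a leaf of the shortest-paths DAG: it never appears in any $\pre{w'}$ for $w'\neq u'$ (otherwise a predecessor $x'\in\adj{u'}$ of $u'$ would, via the clique property, give $d(v',w')\le d(v',x')+1=d(v',u')<d(v',u')+1$, a contradiction). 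Consequently the predecessor structure of $\bcalgwg(G_{\ell+1})$ on $C\setminus\{u'\}$ coincides with that of $\bcalgwg(G_\ell)$, and in every BFS rooted at $v'\neq u'$ one has $\del{u'}=\reach{u'}-1$ exactly.

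With that in hand I would compare the contributions of $\bcalgwg(G_\ell)$ and $\bcalgwg(G_{\ell+1})$ term by term and show that they differ by precisely the three kinds enumerated before the corollary. Type (1) is immediate: the BFS rooted at $u'$ in $G_\ell$ adds $\reach{u'}\cdot\delta_{u'}(v')$ to each $\bcp{v'}$, and line~\ref{ln:bcaddorg} of $\bcalgside$ reproduces this term using the same $\del{v'}$ values computed on $G_\ell$ before $u'$ is deleted. Type (2) is equally direct: since $\del{u'}=\reach{u'}-1$ in every BFS rooted at $v'\neq u'$, the total loss at $\bcp{u'}$ collapses to the scalar $(\reach{u'}-1)\sum_{v'\in C\setminus\{u'\}}\reach{v'}$, which the algorithm adds separately to $\bcp{u'}$ before the removal.

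The main obstacle is type (3), the cross-term. Here I would argue that for every pair $v',w'\in C\setminus\{u'\}$ the value $\del{w'}$ produced by the BFS from $v'$ in $G_\ell$ exceeds its $G_{\ell+1}$ counterpart by exactly $\reach{u'}\cdot\delta_{v'u'}(w')$, because the back-propagation into the predecessors of $u'$ injects the extra term $\tfrac{\sig{\cdot}}{\sig{u'}}\reach{u'}$ that then propagates upward through an otherwise identical DAG. Aggregating this loss over all source roles and using the undirected symmetry $\delta_{v'u'}(w')=\delta_{u'v'}(w')$ together with $\delta_{u'}(w')=\sum_{t'}\delta_{u't'}(w')$, and subtracting the $\reach{w'}-1$ targets represented by $w'$ itself (which contribute nothing to a dependency through $w'$), yields the net loss $\reach{u'}\bigl(\delta_{u'}(w')-(\reach{w'}-1)\bigr)$ per target $w'$. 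This is exactly the quantity added on line~\ref{ln:bcadd}. Summing the three compensations and invoking the inductive hypothesis $\sum_{v'\in V'_\ell,\ \mathbf{org}(v')=v}\bcp{v'}=\bc{v}$ for $G_\ell$ closes the induction and yields \eqref{eq:toprove} for $G_{\ell+1}$.
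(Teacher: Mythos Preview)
Your proposal is correct and follows the same approach as the paper, which does not give a separate formal proof of this corollary but states it immediately after the three-type decomposition (types (1)--(3)) of dependencies lost when a side vertex is removed; your argument is simply a fleshed-out version of exactly that decomposition, with the induction on the preprocessing sequence and the ``$u'$ is a leaf of the shortest-paths DAG'' observation made explicit. One small notational slip: in your type-(3) aggregation the identity you write for $\delta_{u'}(w')$ should carry the ${\tt reach}$ weights on the targets (since $\bcalgside$ initializes $\del{t'}=\reach{t'}-1$), but your final expression $\reach{u'}\bigl(\delta_{u'}(w')-(\reach{w'}-1)\bigr)$ is correct and matches line~\ref{ln:bcadd}.
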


\subsubsection{Compression with identical vertices}

When two vertices in $G$ are identical, all of their pair
dependencies, source dependencies, and BC scores are the same. Hence,
it is possible to combine these vertices and avoid extra
computation. We distinguish 2 different types of identical
vertices. Vertices $u$ and $v$ are type-I identical if and only if
$\Gamma(u) = \Gamma(v)$. Vertices $u$ and $v$ are type-II identical if
and only if $ \Gamma(u) \union \{ u \} = \Gamma(v) \union \{ v \}$.

To handle this, we assign ${\tt ident}$ attribute to each
vertex. ${\tt ident}(v')$ denotes the number of vertices in $G$ that
are identical to $v'$ in $G'$. Initially, $\ident{v'}$ is
set to $1$ for all $v \in V$.

Let ${\cal I} \subset V$ be a set of identical vertices. We remove all
vertices $u' \in {\cal I}$ from $G$ except one of them. Let $v'$ be
this remaining vertex.  We increase $\ident{v'}$ by $|{\cal I}| - 1$,
and keep a list of ${\cal I} \backslash \{v'\}$'s associated with $v'$.

When constructing the BFS graph, the number of paths $\sig{w}$ is
updated incorrectly for an edge $\{v,w\}$ when $v$ is not the
source. The edge leads to $\ident{v}$ paths: $\sig{w} \leftarrow \sig{w} + (\sig{v} *
\ident{v})$ if $v \neq s$.

The propagation of the dependencies $\ident{w}$ along the edge $\{v,w\}$
should be accounted multiple times as in $\del{v} \leftarrow \del{v} +
\frac{ \sig{v} }{ \sig{w}} \ident{w} ( \del{w} + 1)$.

Finally, for a given source $s$, there are $\ident{s}$ similar
shortest path graphs, and the accumulation of the BC value is $\bcp{w}
\leftarrow \bcp{w} + \ident{s} \del{w}$.

The only path that are ignored in this computation of BC are the paths
between $u \in {\cal I}$ and $v \in {\cal I}$. If ${\cal I}$ is a
type-II identical set, then this path are direct and the computation
of BC 
is correct. However, if ${\cal I}$ is a type-I identical set, these
paths have some impact. Fortunately, it only impacts the direct
neighboor of ${\cal I}$. There are exactly $|{\cal I}| \times ( |{\cal
  I}| - 1)$ paths whose impact is equally distributed among the neighbors of
${\cal I}$.

The technique presented in this section has been presented without
taking ${\tt reach}$ into account. Both techniques can be applied
simultaneously but the details are not presented here due to space
limitation.

\begin{corr}
Eq.~\ref{eq:toprove} is correct after shattering $G$ with articulation
vertices and bridges, and compressing it with degree-1, side, and
identical vertices.
\end{corr}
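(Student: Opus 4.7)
The plan is to extend the inductive framework implicit in the preceding corollaries. Assuming Eq.~\ref{eq:toprove} holds for the graph $G_\ell$ produced after any mixture of articulation/bridge shattering and degree-1/side compression, I need only show that a single identical-vertex compression step preserves the invariant, and that later shattering or compression steps applied to a graph carrying nontrivial $\ident{}$ values still preserve it. The latter is not automatic, but follows because the arguments of Theorem~\ref{thm:art} and the earlier corollaries never inspect neighborhoods of individual original vertices, only of the current representative vertices, so their correctness carries over once one reinterprets the ${\tt reach}$ accounting as counting originals represented possibly through a chain of identifications.

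For the identical compression step itself, fix a type-I or type-II identical set ${\cal I}$ collapsed to a single representative $v'$. I would first argue that the modified BFS exactly reproduces the aggregate shortest-path graph rooted at any $s \in {\cal I}$: the multiplier $\ident{v}$ on $\sig{v}$ for outgoing edges accounts for the fact that every retained predecessor stands for $\ident{v}$ parallel predecessors in $G_\ell$, and the symmetric factor $\ident{w}$ during back propagation fans the dependency out to all implicit copies of the downstream vertex. The final multiplier $\ident{s}$ on $\bcp{w}$ then aggregates over sources. The only pair dependencies not captured by this simulation are those with both endpoints inside ${\cal I}$: in the type-II case these pairs are joined by direct edges and contribute to no third vertex, while in the type-I case their $|{\cal I}|(|{\cal I}|-1)$ length-two shortest paths distribute evenly over the common neighborhood $\Gamma({\cal I})$, so a single closed-form correction is added to each $w' \in \Gamma({\cal I})$. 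Lemma~\ref{lem:two}, suitably restated so that ``copy'' includes collapsed-identical copies, continues to guarantee that no remaining pair dependency is lost or double counted.

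The hard part, explicitly deferred by the paper, is the multiplicative interaction between ${\tt reach}$ and ${\tt ident}$. A representative $v'$ in a component $C$ of the compressed and shattered graph now stands for $\reach{v'}\cdot\ident{v'}$ original vertices of $G$; hence the initialization $\del{v'} \leftarrow \reach{v'} - 1$ on line~\ref{ln:middle}, the source-side multiplier $\reach{s'}$ on line~\ref{ln:source}, and the type-I correction all need to be rewritten in terms of the combined multiplicity before the induction step closes rigorously. Once the updated formulas are in place, the induction hypothesis is unchanged in spirit: the sum $\sum_{v' : {\bf org}(v') = v} \bcp{v'}$ records exactly the contributions of the pair dependencies $\delta_{st}(v)$ that have been irrevocably routed to a copy of $v$ by the preprocessing operations performed so far, and a final check that the aggregate correction applied at the last step equals the still-unrouted contributions closes the argument.
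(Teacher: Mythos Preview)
Your proposal tracks the paper's own argument almost exactly: the paper gives no separate proof for this corollary beyond the preceding description of the $\ident{}$ multipliers in the $\sigma$-update, the back-propagation, and the source aggregation, together with the type-I versus type-II distinction and the $|{\cal I}|(|{\cal I}|-1)$ correction spread over $\Gamma({\cal I})$. Your explicit flagging of the unresolved ${\tt reach}$--${\tt ident}$ interaction is not a gap you introduced; the paper itself defers those details ``due to space limitation,'' so your sketch is at least as complete as what the paper offers.
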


\subsection{Implementation Details}
There exist linear time algorithms for detecting articulation vertices
and bridges~\cite{Tarjan74,Hopcroft73}. In our implementation of the
preprocessing phase, after detecting all articulation vertices
with~\cite{Hopcroft73}, the graph is decomposed into its biconnected
components at once. Note that the final decomposition is the same when
the graph is iteratively shattered one articulation point at a time as
described above. But decomposing the graph into its biconnected
components is much faster. A similar approach works for bridges and removes
all of them at once. Since the detection algorithms are linear time,
each cumulative shattering operation takes $\mathcal{O}(m + n)$ time.

For compression techniques, detecting recursively all degree-1
vertices takes $\mathcal{O}(n)$ time. Detecting identical vertices
is expected to take a linear time provided a good hash function to compute the
hash of the neighborhood of each vertex. In our implementation, for
all $v \in V_\ell$, we use $hash(v) = \sum_{u \in
  \Gamma(v)}u$. Upon collision of hash values, the neighborhood of the
two vertices are explicitly compared.

To detect side vertices of degree $k$, we use a simple algorithm which
for each vertex $v$ of degree $k$, verifies if the graph induced by
$\Gamma(v)$ is a clique. In practice, our implementation does not search
for cliques of more than $4$ vertices since our preliminary
experiments show that searching these cliques is expensive. Similar to
shattering, after detecting all vertices from a certain type, we apply
a cumulative compression operation to remove all the detected vertices
at once.

\begin{figure}[htbp]
\begin{center}
\includegraphics[width=0.50\textwidth]{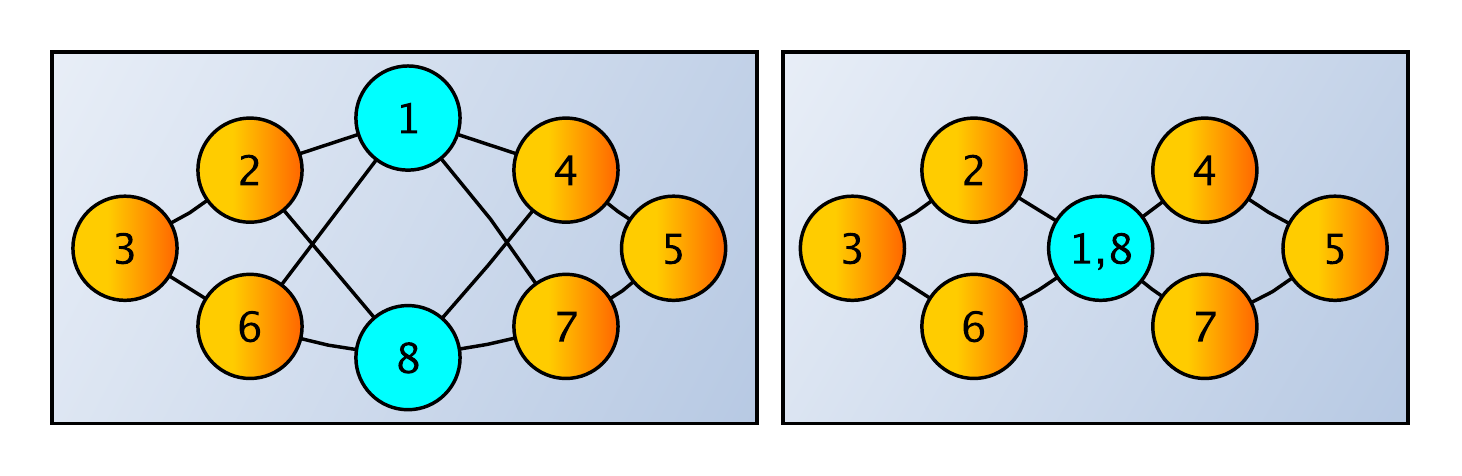}
\caption{Combining identical vertices can create an articulation
  point: Vertices $1$ and $8$ are identical vertices with neighbors
  $\{2,4,6,7\}$. When one of the identical vertices is removed, the
  remaining one is an articulation point. }
\label{fig:idvart}
\end{center}
\end{figure}

The preprocessing phase is implemented as a loop where a single
iteration consecutively tries to shatter/compress the graph by using
the above mentioned five operations.  The loop continues as long as
improvement are made. Indeed, a single iteration of this loop may not
be sufficient since each operation can make the graph amenable to
another one. For example, in our toy graph given in
Figure~\ref{fig:social}.(a), removing the degree-1 vertex Martin makes
Wayne and Henry identical. Furthermore, when Diana is also removed as
a side vertex, Henry and Wayne both become side vertices. Or as
Figure~\ref{fig:idvart} shows, removing identical vertices can form an
articulation vertex .

\section{Experimental Results}\label{sec:exp}

We implemented the original and modified BC algorithms, and the
proposed optimization techniques in {\tt C++}. The code is compiled
with {\tt icc v12.0} and optimization flags {\tt -O2 -DNDEBUG}. The
graph is kept in memory in the compressed row storage~(CRS) format
using $32$-bit data types. The experiments are run on a node with two
Intel Xeon E$5520$ CPU clocked at $2.27$GHz and equipped with $48$GB
of main memory. Despite the machine is equipped with $8$ cores, all
the experiments are run sequentially.

For the experiments, we used $21$ real-life networks from the dataset
of DIMACS Graph Partitioning and Graph Clustering Challenge. The
graphs and their properties are summarized in
Table~\ref{tab:graph_prop}.  They are classified into four
categories. The first one, {\em social}, contains $6$ social
networks. The second one, {\em structural}, contains $5$ structural
engineering graphs. The third one, {\em geographical}, contains $4$
redistricting graphs and one road graph. The last one, {\em misc},
contains graphs from various applications such as autonomous systems,
protein-protein interaction, and power grids.

\begin{table}
\center
\smaller
\begin{tabular}{|l|l|rr|rr|}
\hline
\multicolumn{4}{|c|}{\bf Graph}&\multicolumn{2}{c|}{\bf Time}\\ \hline
application & name & \#vertices & \#edges & org.  & best\\
\hline
\multicolumn{6}{|c|}{Category {\em social}} \\
\hline
Social & CondMat & 16,726 & 47,594 &21.1&  9.1\\
& CondMat03 & 27,519 & 116,181 &102.0&  52.3\\
& hep-th & 8,361 & 15,751 &3.2&  1.6\\
& CondMat05 & 40,421 & 175,691 &209.0&  107.0\\
& PGPgiant & 10,680 & 24,316 &10.7&  3.7\\
& astro-ph & 16,706 & 121,251 &40.3&  22.2\\
\hline
\multicolumn{6}{|c|}{Category {\em structural}} \\
\hline
Auto & bcsstk29 & 13,992 & 302,748 &68.3&  26.4\\
& bcsstk30 & 28,924 & 1,007,284 &399.0&  41.4\\
& bcsstk31 & 35,588 & 572,914 &363.0&  106.0\\
& bcsstk32 & 44,609 & 985,046 &737.0&  77.3\\
& bcsstk33 & 8,738 & 291,583 &37.0&  11.1\\
\hline
\multicolumn{6}{|c|}{Category {\em geographical}} \\
\hline
Redistricting & ak2010 & 45,292 & 108,549 &178.0&  114.0\\
& ct2010 & 67,578 & 168,176 &514.0&  369.0\\
& de2010 & 24,115 & 58,028 &61.4&  40.6\\
& hi2010 & 25,016 & 62,063 &18.4&  12.9\\
\hline
Road & luxembourg & 114,599 & 119,666 &632.0&  390.0\\
\hline
\multicolumn{6}{|c|}{Category {\em misc}} \\
\hline
Router & as-22july06 & 22,963 & 48,436 &39.9&  15.5\\
\hline
Power & power & 4,941 & 6,594 &1.3&  0.7\\
\hline
Biology & ProtInt & 9,673 & 37,081 &11.2&  8.1\\
\hline
Semi-& add32 & 4,960 & 9,462 &1.4&  0.3\\
Conductor& memplus & 17,758 & 54,196 &17.6&  11.2\\\hline\hline
\multicolumn{4}{|r|}{\bf Geomean} & 47.4 & 19.6\\
\hline
\end{tabular}
\caption{Properties of the graphs used in the experiments. Column org.
  shows the original time of $\bcalg$ without any modification. And
  best is the minimum execution time by a combination of the proposed
  heuristics.}
\label{tab:graph_prop}
\end{table}

\subsection{Ordering sparse networks}
As most of the graph-based kernels in data mining, the order of the
vertices and edges accessed by Brandes' algorithm is important. In
today's hardware, cache is one of the fastest and one of the most
scarce resources. When the graphs are big, they do not fit in the
cache, and the number of cache misses along with the number of memory
accesses increases. 

If two vertices in a graph are close, a BFS will access them almost at
the same time. Hence, if we put close vertices in $G$ to close
locations in memory, the number of cache misses will probably
decrease. Following this reasoning, we initiated a BFS from a random
vertex in $G$ and use the queue order of the vertices as their
ordering in $G$. Further benefits of BFS ordering on the execution
time of a graph-based kernel are explained in~\cite{Cong}.

\begin{figure}[htbp]
\center
\includegraphics[width=0.7\columnwidth]{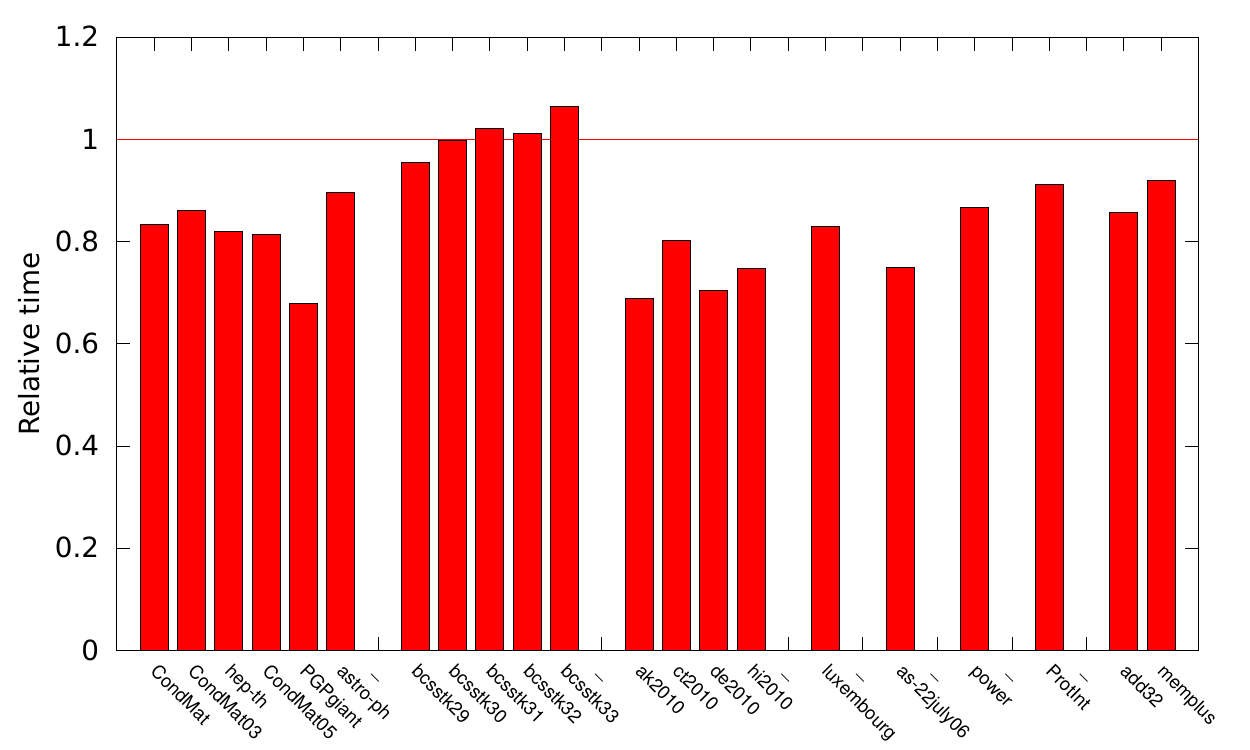}
\caption{Relative performance of BFS ordering with respect to original
time with natural ordering.}
\label{fig:bfs_ord}
\end{figure}

For each graph in our set, Figure~\ref{fig:bfs_ord} shows the time
taken by both the BFS ordering and $\bcalg$ relative to the
original $\bcalg$ execution time with the natural vertex ordering. For
$18$ of $21$ matrices using a BFS ordering 
improved the performance. Overall, it reduced the time to
approximately $80\%$ of the original time on average. Hence compared
with BFS ordering, the natural order of a real-life network has
usually a detrimental effect on the execution time of BC.

\subsection{Shattering and compressing graphs}
For each graph, we tested $7$ different combinations of the
improvements proposed in this paper: They are denoted with {\bf o},
{\bf od}, {\bf odb}, {\bf odba}, {\bf odbas}, {\bf odbai},
and {\bf odbasi}, where 
{\bf o} denotes the BFS {\bf o}rdering,
{\bf d} denotes {\bf d}egree-1 vertices, 
{\bf b} denotes {\bf b}ridge,
{\bf a} denotes {\bf a}rticulation vertices,
{\bf s} denotes {\bf s}ide vertices, and
{\bf i} denotes {\bf i}dentical vertices. The ordering of the letters
denotes the order of application  of the respective improvements.

Given a graph $G$, we measure the time spent for preprocessing $G$ by
a combination to obtain $G'$, computing the BC scores of the vertices in
$G'$, and using these scores computing the BC scores of the vertices in
$G$. For each category, we have two kind of plots:
the first plot shows the numbers of edges in each component of
$G'$. Different components of $G'$ are represented by different
colors. The second plot shows the normalized execution times for all
$7$ combinations. The times for the second chart are normalized
w.r.t. the first combination: the time spent by $\bcalg$ after a BFS
ordering. For each graph in the category, each plot has $7$ stacked
bars representing a different combination in the order described
above.

\begin{figure}[htbp]
\center
\vspace*{-11ex}
\subfigure[Category {\em social}]{
\includegraphics[width=0.53\columnwidth]{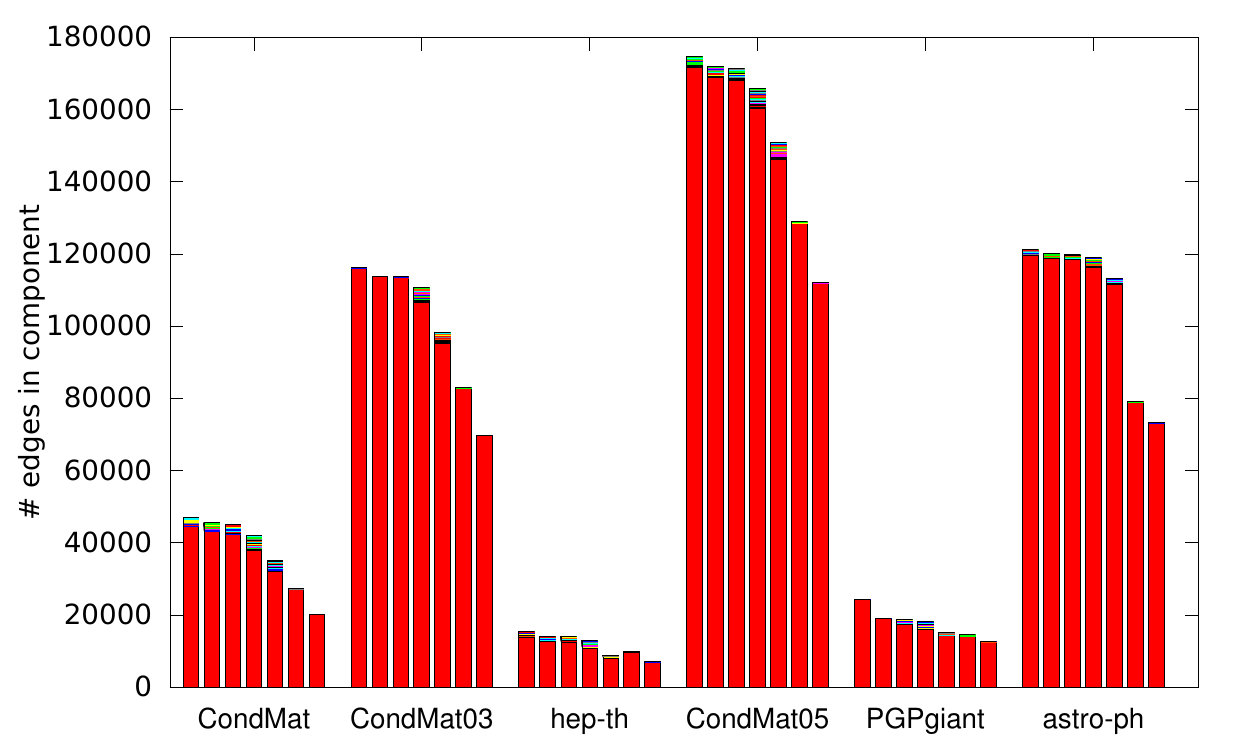}
\includegraphics[width=0.53\columnwidth]{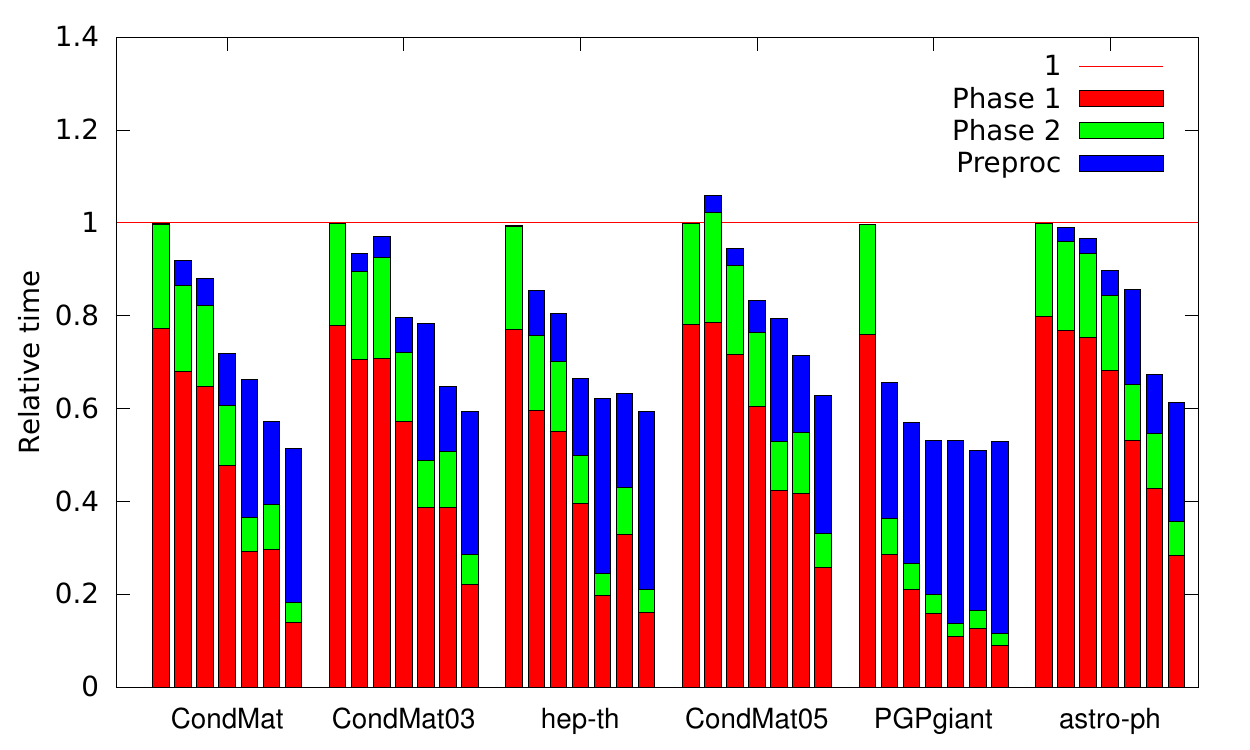}
} \subfigure[Category {\em structural}]{
\includegraphics[width=0.53\columnwidth]{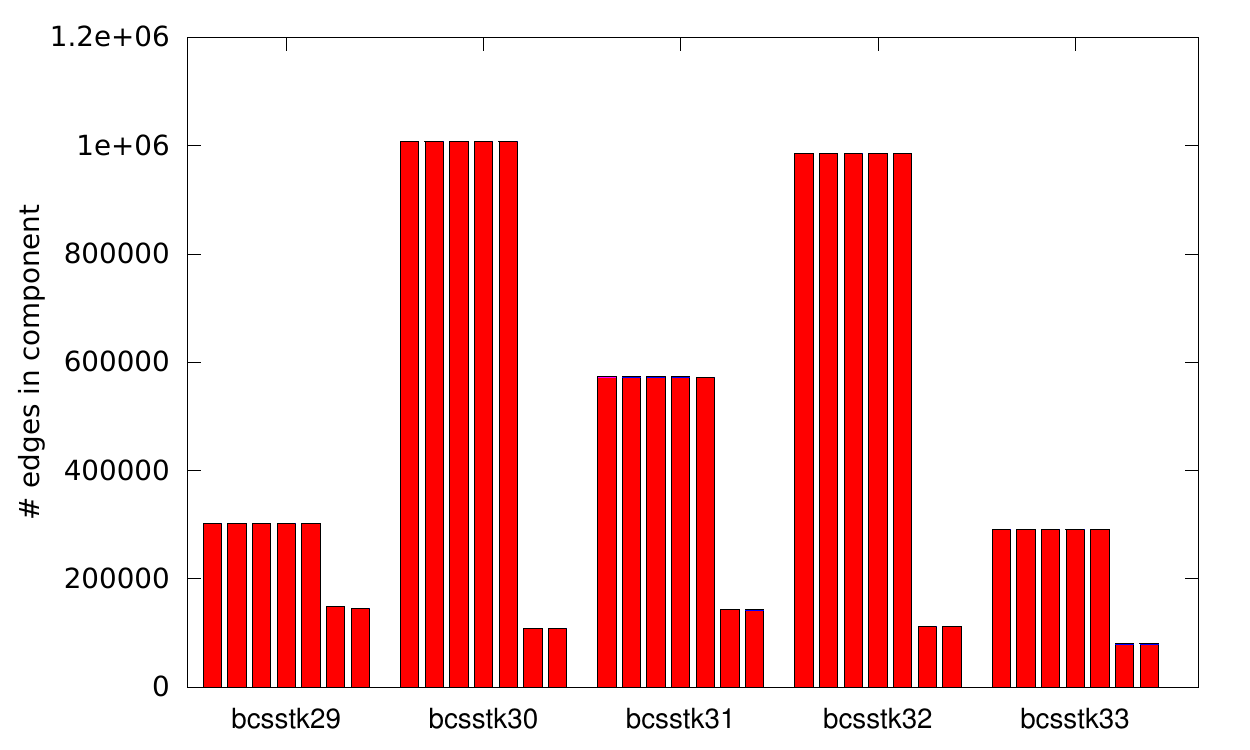}
\includegraphics[width=0.53\columnwidth]{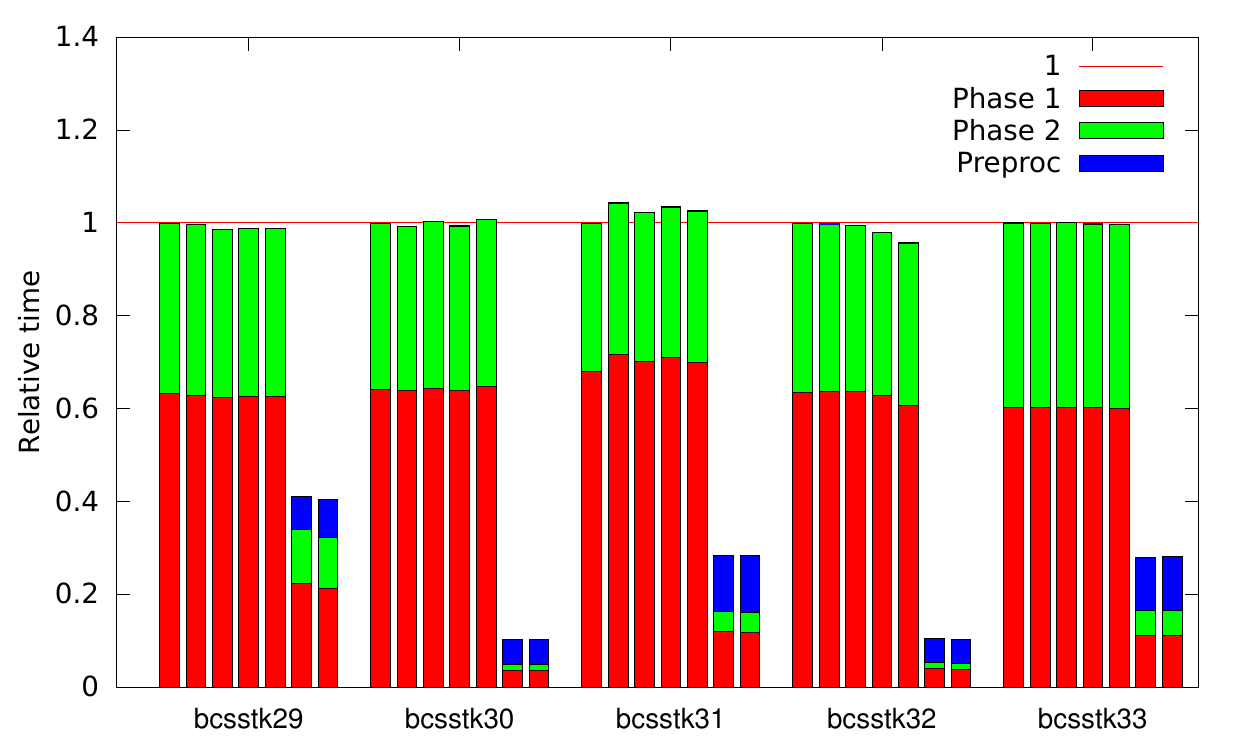}
} \subfigure[Category {\em geographical}]{
\includegraphics[width=0.53\columnwidth]{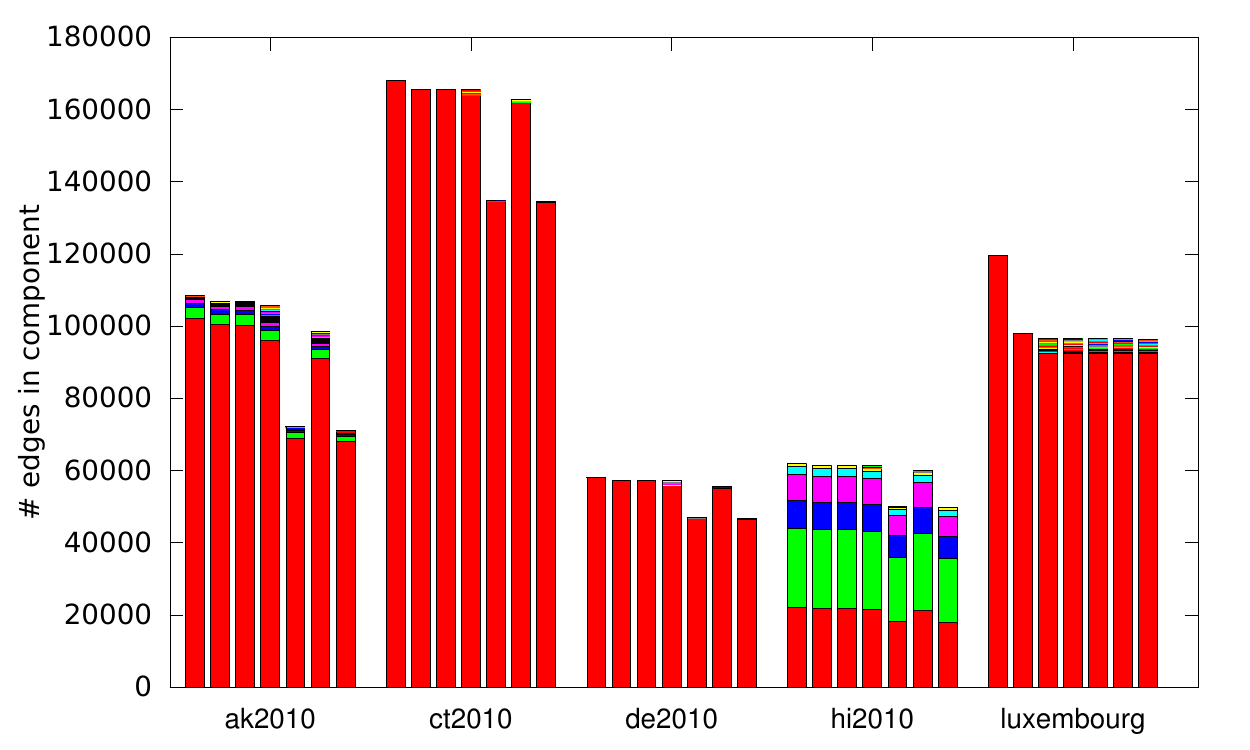}
\includegraphics[width=0.53\columnwidth]{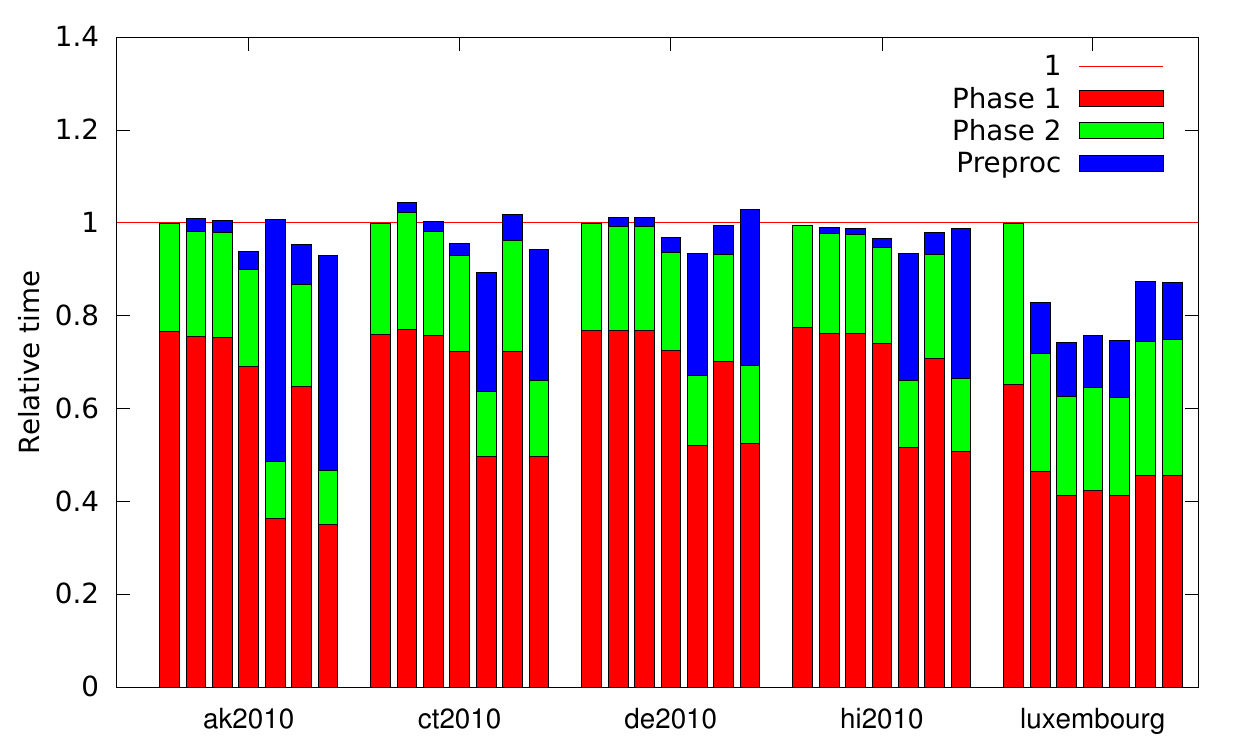}
} \subfigure[Category {\em misc}]{
\includegraphics[width=0.53\columnwidth]{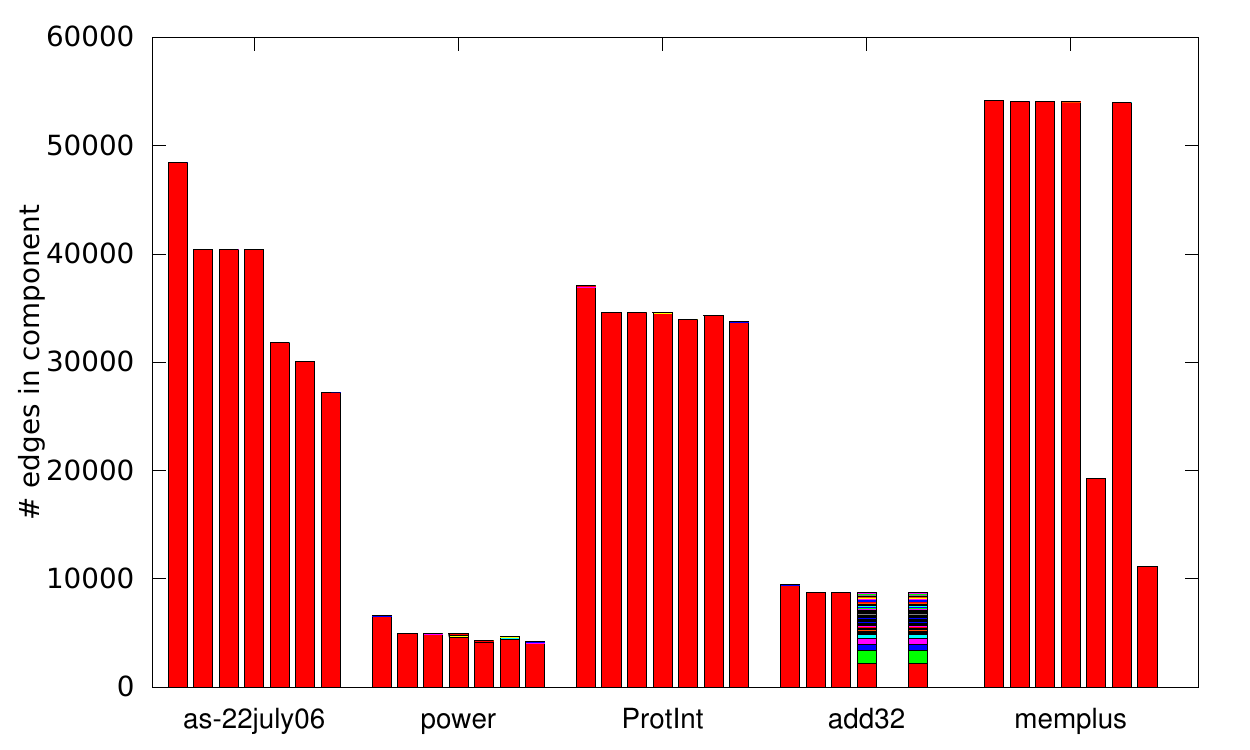}
\includegraphics[width=0.53\columnwidth]{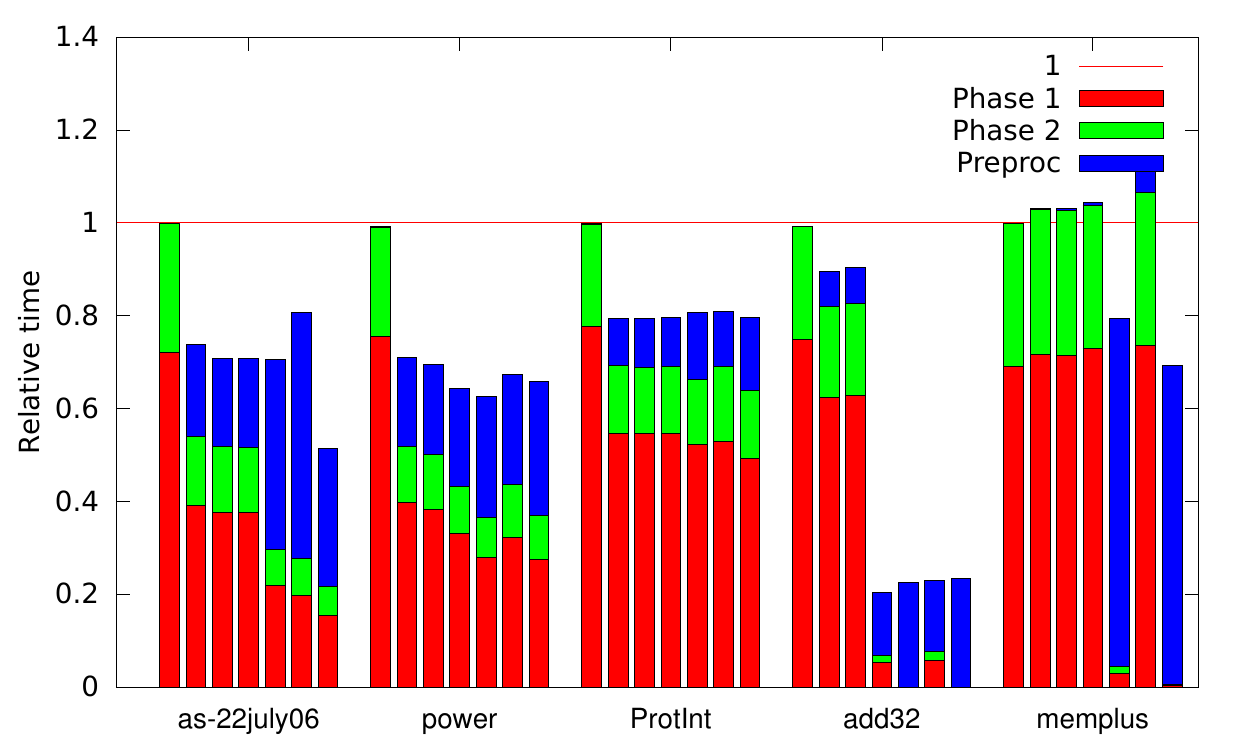}
}\caption{\small Left: The numbers of edges in the connected components of
$G'$ as stack bars. Each component is represented by a different
color. Right: Normalized execution times of preprocessed BC
computations with the combinations {\bf o}, {\bf od}, {\bf odb}, {\bf
odba}, {\bf odbas}, {\bf odbai}, and {\bf odbasi}, respectively, where
the times are normalized w.r.t. {\bf o} and divided to three stages:
preprocessing time and the time spent in the first and second phases
of the BFSs.}
\label{fig:allgraph}

\end{figure}

As Figure~\ref{fig:allgraph} shows, there is a direct correlation
between the remaining edges in $G'$ and the execution time. This
proves that our rationale behind investigating shattering and
compression techniques is valid. However, the figures on the left show
that these graphs do not contain good articulation vertices and
bridges which shatter a graph approximately half. Since, red is almost
always the dominating color, we can argue that
such vertices and edges do not exist in real life graphs.

For {\em social} graphs, each added shattering and compression
technique provides a significant improvement for almost all of the
cases. That is, the best combination is {\bf odbasi} for $5$ out of
$6$ graphs, and the normalized execution time is continuously
decreasing when a combination is enhanced with a new
technique. According to the original and best execution times in Table
~\ref{tab:graph_prop}, for {\em social} graphs, the techniques,
including ordering, provide $53\%$ improvement in total. For {\em
  structural} graphs, although the only working technique is identical
vertices, the improvement is of $79\%$ on the average. For the
redistricting graphs in {\em geographical}, the techniques are not
very useful. However, with the help of BFS ordering, we obtain $32\%$
improvement on average. For the graph {\em luxembourg}, degree-1 and
bridge removal techniques have the most significant impact. Since the
graph is obtained from a road network, this is expected~(roads have
bridges). Hence, if the structure of the graph is known to some
extent, the techniques can be specialized. For example, it is a well
known fact that biological networks usually have a lot of degree-1
vertices but a few articulation vertex. And our results on the graph
{\em ProtInt} confirms this fact since the only significant
improvement is obtained with the combination {\bf od}. In our
experiments, the most interesting graph is {\em add32} since the
combinations {\bf odbas} and {\bf odbasi} completely shatters it. Note
that on the left, there is no bar since there is no remaining edge in
$G'$ and on the right, all the bar is blue which is the color of
preprocessing. When all techniques are combined, we obtain a $59\%$
improvement on average over all graphs.

Please note that the implementation uses $4$ different kernels
depending on whether ${\tt reach}$ and ${\tt ident}$ are used. Each
new attribute brings an increase in runtime which can be seen on {\em
  CondMat03} when going from {\bf o} to {\bf od} and on {\em
  luxembourg} when going from {\bf odba} to {\bf odbai}.

The combinations are compared with each other using a performance
profile graph presented in Figure~\ref{fig:perf_profile}.  A point
$(r,p)$ in the profile means that with $p$ probability, the time of
the corresponding combination on a graph $G$ is at most $r$ times
worse than the best time obtained for that $G$. Hence, the closer
to the y-axis is the better.

\begin{figure}[htbp]
\center
\includegraphics[width=0.7\columnwidth]{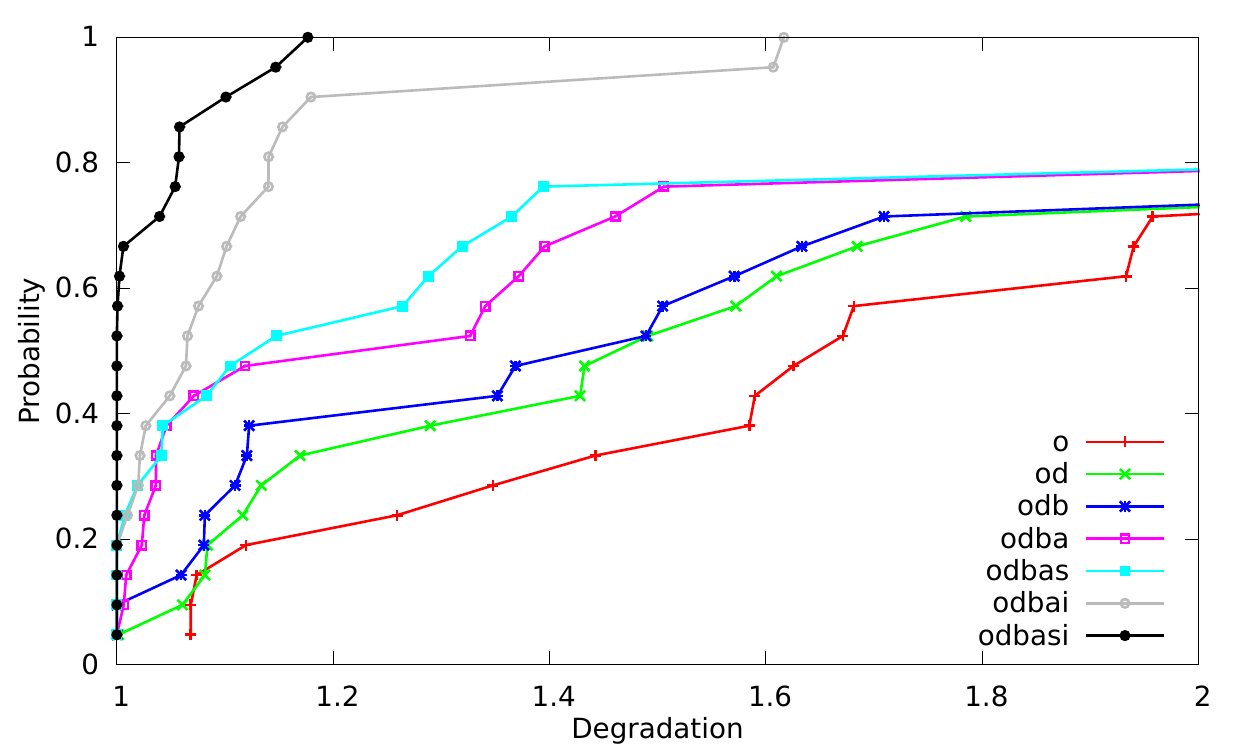}
\caption{Performance profile of various combination of optimization on
  all the selected graphs.}
\label{fig:perf_profile}
\end{figure}

Not using graph shattering techniques ({\bf o}) has the worse performance
profile. It is never optimal. According to the graph, using all
possible techniques is the best idea. This strategy is the optimal one
with more than $60\%$ probability. Clearly, one always wants to use
graph shattering techniques. If little information is available
{\bf odbasi} should be the default choice. However, if one believes
that identical vertices will barely appear in the graph, then
{\bf odbas} might lead to better performances.

\section{Conclusion} \label{sec:con}
 
Betweenness is a very popular centrality metric in practice and proved
to be successful in many fields such as graph mining. But, computing BC
scores of the vertices in a graph is a time consuming task. In this
work, we investigate shattering and compression of networks to reduce
the execution time of BC computations.

The shattering techniques break graphs into smaller components while
keeping the information to recompute the pair and source dependencies
which are the building blocks of BC scores. On the other hand, the
compression techniques do not change the number of components but
reduces the number of vertices and/or edges. An experimental
evaluation with various networks shows that the proposed techniques
are highly effective in practice and they can be a great arsenal to
reduce the execution time while computing BC scores.

We also noticed that the natural order of a real-life network has
usually a detrimental effect on the execution time of BC. In our
experiments, even with a simple and cheap BFS ordering, we managed to
obtain $20\%$ improvement on average. Unfortunately, we are aware of
several works, which do not even consider a simple ordering while
tackling a graph-based computation. So one rule of thumb: ``Order your graphs''.

As a future work, we are planning to extend our techniques to other
centrality measures such as closeness and group-betweenness. Some of
our techniques can readily be extended for weighted and directed
graphs, but for some, a complete modification may be required. We will
investigate these modifications. In addition, we are planning to adapt
our techniques for parallel and/or approximate BC computations.
 




\end{document}